\DeclareMathOperator*{\argmax}{arg\,max}
\algnewcommand\algorithmicforeach{\textbf{for each}}
\newtheorem{proposition}{Proposition}
\newtheorem{Lemma}{Lemma}
\newtheorem{Theorem}{Theorem}
\newtheorem{Definition}{Definition}
\newtheorem{proof-theorem-2}{\textif{Proof of Theorem 2:}}
\begin{document}


\title{LQG Reference Tracking with Safety and Reachability Guarantees under Unknown False Data Injection Attacks} 
                                                
\author{Zhouchi~Li,~\IEEEmembership{Student Member,~IEEE,}
        Luyao~Niu,~\IEEEmembership{Student Member,~IEEE,}
        and~Andrew~Clark,~\IEEEmembership{Member,~IEEE}
\thanks{Z. Li, L. Niu, and A. Clark are with the Department
of Electrical and Computer Engineering, Worcester Polytechnic Institute, Worcester,
MA, 01609 USA e-mail: \{zli4, lniu, aclark\}@wpi.edu.}}

\maketitle

\begin{abstract}                          
We investigate a linear quadratic Gaussian (LQG) tracking problem with safety and reachability constraints in the presence of an adversary who mounts an FDI attack on an unknown set of sensors. For each possible set of compromised sensors, we maintain a state estimator disregarding the sensors in that set, and calculate the optimal LQG control input at each time based on this estimate. 
We propose a control policy which constrains the control input to lie within a fixed distance of the optimal control input corresponding to each state estimate. The control input is obtained at each time step by solving a quadratically constrained quadratic program (QCQP). We prove that our policy can achieve a desired probability of safety and reachability using the barrier certificate method. Our control policy is evaluated via a numerical case study.

\end{abstract}

\begin{IEEEkeywords}
Barrier certificate, false data injection attack, LQG tracking, safety and reachability constraints.
\end{IEEEkeywords}

\section{Introduction}
Safety \cite{mitra2013verifying,banerjee2011ensuring} and reachability \cite{kwon2017reachability} are critical properties of control systems. The safety constraint requires that the system state should remain in a safe region. The reachability constraint requires that the system should reach a set of goal states within a desired time interval. Safety and reachability are fundamental requirements for critical applications including healthcare, transportation, and power systems.

Control systems have been shown to be vulnerable to malicious attacks. Various attacks targeting at actuators and measurement channels have been reported~\cite{liu2011false,teixeira2012attack}. Particularly, false data injection (FDI) attacks, which compromise the sensor measurements, need special concerns because they are easily mounted~\cite{liang20162015}, stealthy if the adversary knows the full information of the system~\cite{yu2015blind,mo2010false}, and can cause serious financial loss or even personal damage~\cite{liang2016review}.  
One example is GPS spoofing against unmanned aerial vehicles and autonomous cars, which results in deviation from the desired trajectory, as well as violations of safety and reachability \cite{kerns2014unmanned,petit2014potential}. The threat of such attacks has led to significant research interest in modeling~\cite{zhang2017game,hu2018state,zhang2020false}, mitigating~\cite{wu2018detecting}, and detecting FDI~\cite{mo2010false}\cite{liu2014detecting}. Resilient state estimation is also investigated in~\cite{shoukry2018smt,pajic2016attack,fawzi2014secure}. The authors of \cite{abdi2018preserving} aim at computing a safe operational windows to guarantee the safety property of a deterministic linear system with complete information. \cite{wu2018detecting} assumes that the correct sensor measurements of system state are always available to the controller even when the system is under attack. An emergency controller is assumed in \cite{gheitasi2019novel} which can be invoked when an alert on attacks is raised. In \cite{farivar2019artificial}, a single-input single-output system under false data injection attack targeting at actuator is studied.

At present, less attention has been paid to the design of closed-loop controllers with safety and reachability guarantees under FDI attacks. In the preliminary conference version of this work~\cite{niu2019lqg}, we investigated the linear quadratic Gaussian (LQG) reference tracking problem, in which there was only one possible set of compromised sensors. In this paper, we generalize the problem so that multiple possible compromised sensor sets are given, each of which corresponds to a different attack scenario. The goal of our approach is to develop a control policy that ensures safety and reachability under each attack scenario while also minimizing the LQG tracking cost when no adversary is present.  

Under our approach, for each attack pattern, the system maintains a state estimate that ignores the sensor measurements corresponding to that attack pattern. The control action chosen at each time step is then constrained to be within a fixed distance of the optimal control action corresponding to each state estimate. The key challenge is that, when there are multiple possible attack scenarios, the state estimates may be inconsistent from each other. To overcome this difficulty, we propose a scheme for detecting and resolving inconsistencies between state estimates. The selected state estimates are utilized to construct constraints that guarantee safety and reachability with desired probability.

The contribution of this paper is two-fold. First, a barrier certificate based policy is proposed to solve the LQG tracking problem with safety and reachability constraints under FDI attack that targets at an unknown set of sensors. We solve a quadratically constrained quadratic program (QCQP) to calculate the control policy at each time step. We develop a procedure to resolve the potential infeasibility of the QCQP. We prove that the controller obtained using our approach guarantees safety and reachability with desired probabilities. We show the feasibility and performance guarantees of the controller when the adversary is absent. Second, we derive a closed-form solution of the controller for a special case of the problem where there is a unique attack pattern. The derived controller not only guarantees safety and reachability, but also achieves better approximation with respect to the expected cost, compared with our preliminary work~\cite{niu2019lqg}.

The note is organized as follows. Section \ref{sec:system_model} states the system and adversary models and the problem formulation. Section \ref{Multiple-adversarial-scenario} and \ref{Single-adversarial-scenario} propose the control policy for the multiple- and single-adversary scenarios, respectively. Section \ref{sec:simulation} contains simulation results. Section \ref{sec:conclusion} concludes the paper.

\section{System Model and Problem Statement}
\label{sec:system_model}
In this section, we first present the system and adversary models. We then give the problem formulation. 

\subsection{System and Adversary Models}
We consider a linear time invariant (LTI) system with state $\mathbf{x}(t) \in \mathbb{R}^{n}$, input $\mathbf{u}(t) \in \mathbb{R}^{m}$, and observations $\mathbf{y}(t) = \left[y_1(t),\ldots,y_p(t)\right]^T \in \mathbb{R}^{p}$. The system dynamics are 
\begin{subequations}
\label{eq:dyn}
    \begin{align}
        \label{eq:dyn-state}
        \dot{\mathbf{x}}(t) &= A\mathbf{x}(t) + B\mathbf{u}(t) + \mathbf{w}(t) \\
        \label{eq:dyn-meas}
        \mathbf{y}(t) &= C\mathbf{x}(t) + \mathbf{v}(t) + \mathbf{a}(t)
    \end{align}
\end{subequations}
In Equation~(\ref{eq:dyn}), $\mathbf{w}(t)$ and $\mathbf{v}(t)$ are independent Gaussian processes with means identically zero and autocorrelation functions $R_{\mathbf{w}}(\tau) = \Sigma_{\mathbf{w}}\delta(\tau)$ and $R_{\mathbf{v}}(\tau) = \Sigma_{\mathbf{v}}\delta(\tau)$, respectively, where $\delta(\tau)$ denotes the Dirac delta function. 
We use $\Sigma_{\mathbf{w}}$ and $\Sigma_{\mathbf{v}}$ to denote the covariance matrices of $\mathbf{w}(t)$ and $\mathbf{v}(t)$ at each time $t$. We assume $(A, N_{\mathbf{w}})$ is stabilizable, where $N_{\mathbf{w}}N_{\mathbf{w}}^T = \Sigma_{\mathbf{w}}.$
The initial state $\mathbf{x}(0)$ is equal to $\mathbf{x}_{0}$.
Denote $\mathfrak{I}(t)$ as the information available to the controller at time $t$. We have $\mathfrak{I}(t) = \{y(t^\prime)|t^\prime\leq t\}\cup\{u(t^\prime)|t^\prime<t\}$ and $\mathfrak{I}(0) = y(0).$ The control policy of the system is defined as a function $\mu(\mathfrak{I}(t)) \in \mathbb{R}^{m}.$

In Equation~(\ref{eq:dyn-meas}), $\mathbf{a}(t)$ is the attack signal injected by the adversary. 
There exists a collection of attack patterns $\{\mathcal{A}_i: i \in \{1,\ldots,q\}\}.$ Here $\mathcal{A}_i \subseteq \{1,\ldots,p\}$ is a subset of sensors, in which it is possible that $\mathcal{A}_i \bigcap \mathcal{A}_j \neq \emptyset$. 
The adversary chooses one $\mathcal{A}^{\ast}$ from $\mathcal{A}_1,\ldots,\mathcal{A}_q.$ The adversary then chooses $\mathbf{a}(t)$ arbitrary values such that $support(\mathbf{a}(t)) \subseteq \mathcal{A}^{\ast}$ for all time $t \in [0,T].$ The controller knows the possible compromised sets $\mathcal{A}_1,\ldots,\mathcal{A}_q,$ but does not know which set $\mathcal{A}^{\ast}$ has been chosen by the adversary.
At each time $t,$ the adversary knows the control policy $\mu(\mathfrak{I}(t)),$ the system state $\mathbf{x}(t^{\prime}),$ the system output $\mathbf{y}(t^{\prime}),$ and the control input $\mathbf{u}(t^{\prime})$ for all $t^{\prime} \leq t.$ Denote the adversary policy $\nu(t)$ as a function which maps 
$\{\mathbf{x}(t^{\prime}), \mathbf{u}(t^{\prime}), \mathbf{y}(t^{\prime}): \forall t^{\prime} \leq t\}$ to $\mathbf{a}(t).$

Let $G$ and $U$ be the goal states and unsafe states defined as 
$G = \{\mathbf{x} \in \mathbb{R}^n: g_G(\mathbf{x}) \geq 0\},$ and $U = \{\mathbf{x} \in \mathbb{R}^n: g_U(\mathbf{x}) \geq 0\},$ respectively.
Define the safety constraint as $\mathbf{x}(t) \notin U \ \forall t \in [0,T],$ which prevents the system state from reaching $U$ for all time $t \in [0,T].$ 
We define the reachability constraint as
$\mathbf{x}(T) \in G,$ which requires the system state to be in $G$ at final time $T.$ A reference trajectory $\{\mathbf{r}(t) \in \mathbb{R}^n : t \in [0,T]\}$ is given, which satisfies $\mathbf{r}(t) \notin U$ and $\mathbf{r}(T) \in G$.

\subsection{Problem Formulation}
The problem studied in this work is 
\begin{align}
\label{eq:problem-form}
    \min_{\mu} \ & \mathbf{E}[(\mathbf{x}(T)-\mathbf{r}(T))^{T}F(\mathbf{x}(T)-\mathbf{r}(T)) + \int_{0}^{T}{(\mathbf{u}(t)^{T}R\mathbf{u}(t) }\nonumber\\
    &+ (\mathbf{x}(t)-\mathbf{r}(t))^{T}Q(\mathbf{x}(t)-\mathbf{r}(t))  
    ) \ dt | \mu, \beta = 1 ] \\
    \mbox{s.t.} \ & \max_{\nu}{\{Pr(\cup_{t \in [0,T]}{\{\mathbf{x}(t) \in U\}} | \mu, \nu)\}} \leq \epsilon_s  \nonumber\\ 
    & \min_{\nu}{\left\{Pr(\mathbf{x}(T) \in G | \mu, \nu)\right\}} \geq 1-\epsilon_r\nonumber
\end{align}
The objective function implies that the goal of the system is to minimize the expected cost when there is no adversary ($\beta = 1$), while guaranteeing safety and reachability when the system is under attack ($\beta = 0$).
The first constraint implies that the probability of violating the safety constraint in the worst case over all the adversary policies should be lower than the bound $\epsilon_s.$ The second constraint means that the probability of achieving the reachability constraint should be greater than the threshold $1 - \epsilon_r$ under any adversary's policy.

\section{Control Strategy for Multiple-adversary Scenario}
\label{Multiple-adversarial-scenario}

In this section, we present the solution approach for multi-adversary scenario. Our proposed control policy is illustrated in Figure~\ref{fig:block_diagram}. Our solution approach is based on the observation that the adversary can only bias the system state by injecting false measurements to the sensors to induce erroneous control inputs. Hence, if we can restrict the control inputs to stay within a particular neighborhood of each optimal control signal $\mathbf{u}_{\alpha,i}(t)$ that corresponds to the measurements from $\{1,\ldots,p\}\setminus \mathcal{A}_i$ for each attack pattern $\mathcal{A}_i$, then we can limit the impact from the adversary’s attack signal.

\begin{figure}[!ht]
\centering
\includegraphics[width=3.5in]{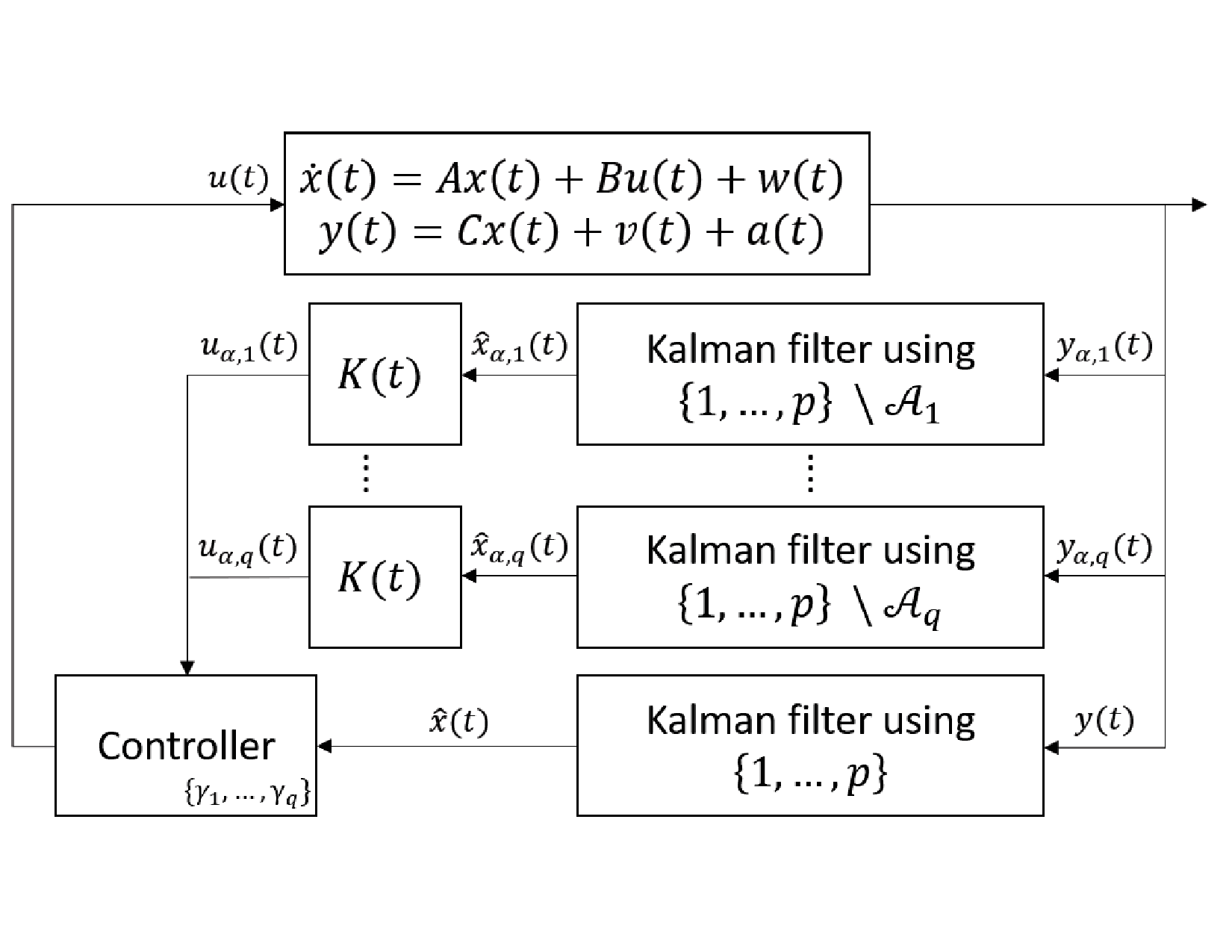}
\caption{Schematic illustration of our proposed approach.} 
\label{fig:block_diagram}
\end{figure}

\subsection{Control Policy Construction}
\label{subsec:control-policy}
 
Let $\mathbf{y}_{\alpha,i}(t)$ be the measurements of sensors in $\{1,\ldots,p\} \setminus \mathcal{A}_i$. Denote $C_{\alpha, i}$ and $\mathbf{v}_{\alpha, i}(t)$ as $C$ and $\mathbf{v}(t)$ with rows indexed by $\{1,\ldots,p\} \setminus \mathcal{A}_i,$ so that $\mathbf{y}_{\alpha, i}(t) = C_{\alpha, i}\mathbf{x}(t) + \mathbf{v}_{\alpha, i}(t)$. We assume that all systems $(A, C_{\alpha, i}) \ \forall i \in \{1,\ldots,q\}$ are observable. Let $\Sigma_{\mathbf{v}_{\alpha, i}}$ denote the covariance matrix of $\mathbf{v}_{\alpha, i}$. 

The Kalman Filter (KF) estimates $\hat{\mathbf{x}}(t)$ and $\hat{\mathbf{x}}_{\alpha,i}(t)$ are~\cite{anderson2007optimal}
\begin{equation*}
\label{eq:dynamics-x-hat}
    \begin{array}{ll}
        \dot{\hat{\mathbf{x}}}(t) &= A\hat{\mathbf{x}}(t) + B\mathbf{u}(t) + \Theta(t)(\mathbf{y}(t) - C\hat{\mathbf{x}}(t)) \\
        \Theta(t) &= \Phi(t)C^{T}\Sigma_{\mathbf{v}}^{-1} \\
        \dot{\Phi}(t) &= A\Phi(t) + \Phi(t)A^{T} + \Sigma_{\mathbf{w}} - \Phi(t)C^{T}\Sigma_{\mathbf{v}}^{-1}C\Phi(t)^{T}
    \end{array}
\end{equation*}
and
\begin{align}
    \label{eq:dynamics-x-hat1}
    \dot{\hat{\mathbf{x}}}_{\alpha,i}(t) =& A\hat{\mathbf{x}}_{\alpha,i}(t) + B\mathbf{u}_{\alpha,i}(t)\nonumber \\ 
    &+ \Theta_{\alpha,i}(t)(\mathbf{y}_{\alpha,i}(t) - C_{\alpha,i}\hat{\mathbf{x}}_{\alpha,i}(t))\\
    \label{eq:dynamics-x-hat2}
    \Theta_{\alpha,i}(t) =& \Phi_{\alpha,i}(t)C_{\alpha,i}^{T}\Sigma_{\mathbf{v}_{\alpha,i}}^{-1} \\
    \dot{\Phi}_{\alpha,i}(t) =& A\Phi_{\alpha,i}(t) + \Phi_{\alpha,i}(t)A^{T} + \Sigma_{\mathbf{w}}\nonumber \\
    \label{eq:dynamics-x-hat3}
    &- \Phi_{\alpha,i}(t)C_{\alpha,i}^{T}\Sigma_{\mathbf{v}_{\alpha,i}}^{-1}C_{\alpha,i}\Phi_{\alpha,i}(t)^{T}
\end{align}
where $\Phi_{\alpha,i}(0)$ and $\hat{\mathbf{x}}_{\alpha,i}(0)$ are given.
From~\cite{anderson2007optimal}, the optimal LQG control based on $\hat{\mathbf{x}}_{\alpha,i}(t)$ is
\begin{align}
    \label{eq:controller-attack-free1}
    \mathbf{u}_{\alpha,i}(t) &= \frac{1}{2}K(t)\hat{\mathbf{x}}_{\alpha,i}(t) - \frac{1}{2}R^{-1}B^{T}\mathbf{s}(t) \\
    \label{eq:controller-attack-free2}
    K(t) &= -R^{-1}B^{T}P(t) \\
    \label{eq:controller-attack-free3}
    -\dot{P}(t) &= A^{T}P(t) + P(t)A - \frac{1}{2}P(t)BR^{-1}B^{T}P(t) + 2Q \\
    \label{eq:controller-attack-free4}
    \dot{\mathbf{s}}(t) &= (-A^{T} + \frac{1}{2}P(t)BR^{-1}B^{T})\mathbf{s}(t) + 2Q\mathbf{r}(t)    
\end{align}
where $\mathbf{s}(t)$ and $P(t)$ have boundary conditions $\mathbf{s}(T) = -2Fr(T)$ and $P(T) = 2F$.

Denote $\mathbf{\hat{x}}_{\alpha}^{\ast}(t)$ as the KF estimate of $\mathbf{x}(t)$ based on $\mathbf{y}_{\alpha}^{\ast}(t)$ of sensors in $\{1,\ldots,p\} \setminus \mathcal{A}^{\ast}.$ Dynamics of $\mathbf{\hat{x}}_{\alpha}^{\ast}(t)$ is analogous to Equations~(\ref{eq:dynamics-x-hat1})-(\ref{eq:dynamics-x-hat3}). Similarly, we define $\mathbf{u}_{\alpha}^{\ast}(t)$ as the LQG tracking optimal control input based on $\{\mathbf{y}_{\alpha}^{\ast}(t^{\prime}) \ t^{\prime} \leq t\}$.

Define the set of feasible control inputs at time $t$ with respect to attack pattern $\mathcal{A}_i$ as 
$\mathcal{U}_{\gamma_i}(t) \triangleq \{\mathbf{u}(t) : \left(\mathbf{u}(t) - \mathbf{u}_{\alpha,i}(t)\right)^T\left(\mathbf{u}(t) - \mathbf{u}_{\alpha,i}(t)\right) \leq \gamma_i^2\},$
where $\gamma_i \geq 0$ is a parameter that will be discussed in Section~\ref{subsec:barrier-certificate}. Define $\mathcal{U}_{\gamma}^{\ast}(t) \triangleq \{\mathbf{u}(t) : \|\mathbf{u}(t)-\mathbf{u}_{\alpha}^{\ast}(t)\|_{2} \leq \gamma^{\ast}\}$ and $\mathcal{U}(t) \triangleq \bigcap_{i = 1}^{q} \mathcal{U}_{\gamma_i}(t).$ Using this constraint instead of the constraint in (\ref{eq:problem-form}), the problem becomes
\begin{subequations}\label{eq:problem-form-revised multi}
 \begin{align}
 \min_{\mathbf{u}(t)} \ & \mathbf{E}[\int_{0}^{T}{((\mathbf{x}(t)-\mathbf{r}(t))^{T}Q(\mathbf{x}(t)-\mathbf{r}(t)) }   + \mathbf{u}(t)^{T}R\mathbf{u}(t))  dt \nonumber \\
 &+(\mathbf{x}(T)-\mathbf{r}(T))^{T}F(\mathbf{x}(T)-\mathbf{r}(T)) ] \label{eq:problem-form-revised obj multi}\\
 \mbox{s.t.} \ & \mathbf{u}(t) \in \mathcal{U}(t), \forall t \in [0,T]. \label{eq:problem-form-revised con multi}
 \end{align}
 \end{subequations} 

The solution of (\ref{eq:problem-form-revised multi}) can be computed by solving a stochastic Hamilton-Jacobi-Bellman (HJB) equation~\cite{kirk2004optimal}
\begin{equation}
\label{eq:HJB multi}
    \begin{array}{ll}
        0 &= \min_{\mathbf{u}(t) \in \mathcal{U}(t)}{\left\{(\mathbf{x}(t)-\mathbf{r}(t))^{T}Q(\mathbf{x}(t)-\mathbf{r}(t)) \right.} \\
        &+ V_{\mathbf{x}}(t,\mathbf{x})(A\mathbf{x}(t)+B\mathbf{u}(t)) + \frac{1}{2}\mathbf{tr}(V_{\mathbf{x}\mathbf{x}}(t,\mathbf{x})\Sigma_{\mathbf{w}})\\
        &+ \left.\mathbf{u}(t)^{T}R\mathbf{u}(t) + V_{t}(t,\mathbf{x})  \right\},
    \end{array}
\end{equation}
where the optimal $\mathbf{u}(t)$ with respect to Equation~(\ref{eq:HJB multi}) is equal to the minimizer of Equation~(\ref{eq:HJB multi}) for all $t \in [0, T].$

Solving the constrained partial differential equation (PDE) (\ref{eq:HJB multi}) is challenging, so we relax the constraint of problem \eqref{eq:HJB multi}, and approximate the value function of~(\ref{eq:HJB multi}) by relaxing the constraint~\eqref{eq:problem-form-revised con multi}. We observe that, while we relax~\eqref{eq:problem-form-revised con multi} when approximating the value function, the input will still satisfy $\mathbf{u}(t) \in \mathcal{U}(t).$ The value function is equal to \cite{anderson2007optimal}
\begin{equation}
\label{eq:value-func}
    V(t,\mathbf{x}) = \frac{1}{2}\mathbf{x}(t)^{T}P(t)\mathbf{x}(t) + \beta(t) + \mathbf{s}(t)^{T}\mathbf{x}(t) + s_{0}(t),
\end{equation}
where $\dot{s}_{0}(t) = \frac{1}{4}\mathbf{s}(t)^{T}BR^{-1}B^{T}\mathbf{s}(t) - \mathbf{r}(t)^{T}Q\mathbf{r}(t)$ and $-\dot{\beta}(t) = \frac{1}{2}\mathbf{tr}(P(t)\Sigma_{\mathbf{w}}).$

Substituting Equation~(\ref{eq:value-func}) into Equation~(\ref{eq:HJB multi}), we have 
\begin{multline}
\label{eq:HJB-revised}
 0 = \min_{\mathbf{u}(t) \in \mathcal{U}(t)}\Big\{(\mathbf{x}(t)-\mathbf{r}(t))^{T}Q(\mathbf{x}(t)-\mathbf{r}(t)) + \dot{s}_{0}(t)+\dot{\beta}(t)\\
 + \mathbf{u}(t)^{T}R\mathbf{u}(t) + \mathbf{x}(t)^{T}P(t)(A\mathbf{x}(t) + B\mathbf{u}(t)) + \mathbf{x}(t)^{T}\dot{\mathbf{s}}(t)\\
 + \frac{1}{2}\mathbf{x}(t)^{T}\dot{P}(t)\mathbf{x}(t) + \mathbf{s}(t)^{T}(A\mathbf{x}(t)+B\mathbf{u}(t)) \Big\}
 \end{multline}
We approximate the optimal $\mathbf{u}(t)$ with respect to Equation~(\ref{eq:HJB multi}) by the minimizer of Equation~(\ref{eq:HJB-revised}). Computing the minimizer of Equation~(\ref{eq:HJB-revised}) is equivalent to solving a QCQP
 \begin{align}
  \label{eq:approx_opt}
  \begin{split}
      \min_{\mathbf{u}(t)} \ & \mathbf{u}(t)^{T}R\mathbf{u}(t) + \hat{\mathbf{x}}(t)^{T}P(t)B\mathbf{u}(t) + \mathbf{s}(t)^{T}B\mathbf{u}(t) \\
 \mbox{s.t.} \ & \mathbf{u}(t) \in \mathcal{U}(t), \forall t \in [0,T].
  \end{split}
 \end{align}
at each time $t$. QCQPs in the form of Equation~(\ref{eq:approx_opt}) can be solved efficiently using existing solvers~\cite{domahidi2012efficient}\cite{torrisi2018projected}.

\subsection{Safety and Reachability Verification}
\label{subsec:barrier-certificate}

Parameters $\gamma_i$ in $\mathcal{U}_{\gamma_i} = \{\mathbf{u}(t) : \|\mathbf{u}(t)-\mathbf{u}_{\alpha,i}(t)\|_{2} \leq \gamma_i\}$ determines the size of the set of feasible control inputs at each time $t$. Larger $\gamma_i$ provide more choices of control input, which improve the performance of the system in the attack-free scenario. However, enlarging the feasible control input set also increases the probability that the system may be biased and led to the unsafe states. Thus, there is a tradeoff between the performance and the risk of violating safety when selecting $\gamma_i$.

We develop a binary search algorithm to find the maximal feasible $\gamma_i$ which satisfie the safety and reachability constraints in equation~(\ref{eq:problem-form}). We use the barrier function method to determine whether safety and reachability are guaranteed for each value of $\gamma_i.$ The idea is to construct a barrier function $D_i(\mathbf{x})$ for each $\gamma_i$ such that, for some $L<K,$ $D_i(\mathbf{x_0}) \leq L$, $D_i(\mathbf{x}) > K$ for all $\mathbf{x}(t) \in U$, and $D_i(\mathbf{x})$ is decreasing over any feasible trajectories of $\mathbf{x}(t).$ Thus, if this $D_i(\mathbf{x})$ exists for each $\gamma_i$, $\mathbf{x}(t)$ will not enter the unsafe region.

Let $\hat{\mathbf{u}}_i(t) = \mathbf{u}(t) - \mathbf{u}_{\alpha,i}(t)$ be regarded as the disturbance introduced by $\mathbf{a}(t)$ with respect to each attack pattern $\mathcal{A}_i.$ In order to ensure safety and reachability under any FDI, we assume that $\hat{\mathbf{u}}_i(t)$ could be arbitrary values in $\mathcal{U}_{\gamma_i}(t)$. The dynamics can be rewritten as $\dot{\mathbf{x}}(t) = A\mathbf{x}(t) + B\mathbf{u}_{\alpha,i}(t) + B\hat{\mathbf{u}}_i(t) + \mathbf{w}(t)$. From equation~(\ref{eq:controller-attack-free1}) we know that $\mathbf{u}_{\alpha,i}(t)$ is computed via $\hat{\mathbf{x}}_{\alpha,i}(t).$ Hence, in order to consider the dynamics of both $\mathbf{x}(t)$ and $\hat{\mathbf{x}}_{\alpha,i}(t),$ we develop an extended system
\begin{equation*}
    \begin{array}{ll}
        \dot{\mathbf{x}}(t)&= A\mathbf{x}(t) + \frac{1}{2}BK(t)\hat{\mathbf{x}}_{\alpha,i}(t) + B\hat{\mathbf{u}}_i(t) + \mathbf{w}(t) \\
        &\quad- \frac{1}{2}BR^{-1}B^{T}\mathbf{s}(t) \\
        \dot{\hat{\mathbf{x}}}_{\alpha,i}(t)&= \Theta_{\alpha,i}(t)C_{\alpha,i}\mathbf{x}(t) + A\hat{\mathbf{x}}_{\alpha,i}(t) + B\hat{\mathbf{u}}_i(t) + \frac{1}{2}BK(t)\hat{\mathbf{x}}_{\alpha,i}(t)\\
        &\quad-\Theta_{\alpha,i}(t)C_{\alpha,i}\hat{\mathbf{x}}_{\alpha,i}(t)  + \Theta_{\alpha,i}(t)\mathbf{v}_{\alpha,i}(t)- \frac{1}{2}BR^{-1}B^{T}\mathbf{s}(t)\\
        \Theta_{\alpha,i}(t)&= \Phi_{\alpha,i}(t)C_{\alpha,i}^{T}\Sigma_{\mathbf{v}_{\alpha,i}}^{-1} \\
        \dot{\Phi}_{\alpha,i}(t)&= A\Phi_{\alpha,i}(t) + \Phi_{\alpha,i}(t)A^{T} + \Sigma_{\mathbf{w}} \\
        &\quad- \Phi_{\alpha,i}(t)C_{\alpha,i}^{T}\Sigma_{\mathbf{v}_{\alpha,i}}^{-1}C_{\alpha,i}\Phi_{\alpha,i}(t)^{T}
    \end{array}
\end{equation*}
where $\Phi_{\alpha,i}(0) = 0$ and $\hat{\mathbf{x}}_{\alpha,i}(0)=\mathbf{x}_0$ for $\forall i \in \{1,\ldots,q\}$.

Define $\overline{\mathbf{x}}_i(t) = \big(
\begin{smallmatrix}
\mathbf{x}(t) \\
\hat{\mathbf{x}}_{\alpha,i}(t)
\end{smallmatrix}
\big)$ as an extended state vector, with $$\dot{\overline{\mathbf{x}}}_i(t) = \overline{A}_i\overline{\mathbf{x}}_i(t) + \overline{B}\hat{\mathbf{u}}_i(t) + \overline{F}\mathbf{w}(t) + \overline{G}_i(t)\mathbf{v}_{\alpha,i}(t)+p(t),$$ where
\begin{subequations}
    \begin{align*}
        \overline{A}_i &= \begin{pmatrix}A & \frac{1}{2}BK(t) \\
        \Theta_{\alpha,i}(t)C_{\alpha,i} & A-\Theta_{\alpha,i}(t) C_{\alpha,i} + \frac{1}{2}BK(t)\end{pmatrix},\\
        \overline{B} &= \begin{pmatrix}B\\B\end{pmatrix}, \quad
        \overline{F} = \begin{pmatrix}I \\0 \\\end{pmatrix}, \quad \overline{G} = \begin{pmatrix}0 \\\Theta(t) \\\end{pmatrix},\\
        p(t) &= \begin{pmatrix}    -\frac{1}{2}BR^{-1}B^T\mathbf{s}(t) \\-\frac{1}{2}BR^{-1}B^T\mathbf{s}(t) \end{pmatrix},
    \end{align*}
\end{subequations}
and $\overline{\mathbf{x}}_i(t_0) = \big(
\begin{smallmatrix}
\mathbf{x}_0 \\
\mathbf{x}_0
\end{smallmatrix}
\big) = \overline{\mathbf{x}}_0$. The set of unsafe states of the extended system $\overline{U}_i$ is defined as 
$\overline{U}_i \triangleq \{\big(
\begin{smallmatrix}
\mathbf{x} \\
\hat{\mathbf{x}}_{\alpha,i}
\end{smallmatrix}
\big): \mathbf{x}(t) \in U, t \in [0, T]\}.$
The set of goal states of the extended system $\overline{G}_i$ is defined as 
$\overline{G}_i \triangleq \{\big(
\begin{smallmatrix}
\mathbf{x} \\
\hat{\mathbf{x}}_{\alpha,i}
\end{smallmatrix}
\big): \mathbf{x}(T) \in G\}.$
Let $N_{\mathbf{w}}$ and $N_{\mathbf{v}_{\alpha,i}}$ be matrices that satisfy $N_{\mathbf{w}}N_{\mathbf{w}}^{T} = \Sigma_{\mathbf{w}}$ and $N_{\mathbf{v}_{\alpha,i}}N_{\mathbf{v}_{\alpha,i}}^{T} = \Sigma_{\mathbf{v}_{\alpha,i}}$. Define $\Lambda_i = \big(
\begin{smallmatrix}
\overline{F}N_{\mathbf{w}} \\
\overline{G} N_{\mathbf{v}_{\alpha,i}}
\end{smallmatrix}
\big).$

We next analyze the safety property of this approach using the barrier method. This result follows from Proposition 2 and Theorem 15 of \cite{prajna2007framework} and is provided for completeness. As a preliminary, we define the concept of a martingale as follows.

\begin{Definition}
\label{def:martingale}
A continuous random process $(X_{t})$ is a \emph{martingale} if $\mathbf{E}(X_{s} | X_{t}) = X_{t}$ for all $s \geq t$. A \emph{supermartingale} is a random process such that $\mathbf{E}(X_{s} | X_{t}) \leq X_{t}$ for all $s \geq t$. A \emph{submartingale} is a random process such that $\mathbf{E}(X_{s} | X_{t}) \geq X_{t}$ for all $s \geq t$.
\end{Definition}

The probability that a submartingale crosses a particular bound is bounded as follows.

\begin{Lemma}[Doob's Martingale Inequality~\cite{karatzsas1991brownian}]
\label{lemma:doob_martingale}
Let $X_{t}$ be a nonnegative supermartingale. Then for any $T > 0$ and constant $\theta$, $$Pr\left(\sup_{0 \leq t \leq T}{X_{t}} \geq \theta\right) \leq \frac{\mathbf{E}(X_{T})}{\theta}.$$
\end{Lemma}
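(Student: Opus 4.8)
The plan is to recognize this as the continuous-time Doob maximal inequality and to prove it by the standard first-passage-time plus optional-stopping argument used in the cited Karatzas--Shreve. Fix $T>0$ and assume $\theta>0$ (otherwise the bound is vacuous). I would work with a right-continuous modification of $X_t$ on its completed, right-continuous natural filtration, so that hitting times are stopping times and $\sup_{0\le t\le T}X_t$ is measurable. First I would introduce the first-passage time $\tau \triangleq \inf\{t\in[0,T]: X_t \ge \theta\}$, with the convention $\tau = T$ on the complement of the exceedance event $A \triangleq \{\sup_{0\le t\le T}X_t \ge \theta\}$; then $\tau$ is a stopping time with $0\le\tau\le T$ bounded. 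Next I would use that $X_t$ is continuous to conclude that on $A$ the bound is actually attained, i.e. $X_\tau \ge \theta$ (the supremum over the compact interval $[0,\tau]$ is achieved and the level $\theta$ is reached by the intermediate value property), so that $X_\tau\mathbf{1}_A \ge \theta\,\mathbf{1}_A$ pointwise.

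The core step is then optional stopping combined with a Markov-type estimate. Since $\tau\le T$ is bounded, optional sampling applies. Treating $X_t$ as a submartingale gives $\mathbf{E}(X_T\mid\mathcal F_\tau)\ge X_\tau$, and using $X_T\ge 0$ to restrict to $A$,
\[ \mathbf{E}(X_T)\;\ge\;\mathbf{E}(X_T\mathbf{1}_A)\;=\;\mathbf{E}\big(\mathbf{E}(X_T\mid\mathcal F_\tau)\,\mathbf{1}_A\big)\;\ge\;\mathbf{E}(X_\tau\mathbf{1}_A)\;\ge\;\theta\,Pr(A), \]
so dividing by $\theta$ yields $Pr(A)\le\mathbf{E}(X_T)/\theta$. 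I would flag a minor inconsistency in the statement: as literally written for a nonnegative \emph{super}martingale, the honest conclusion of this step (via $\mathbf{E}(X_\tau)\le\mathbf{E}(X_0)$ and $X_\tau\mathbf{1}_{A^c}\ge0$) is the sharper bound $Pr(A)\le\mathbf{E}(X_0)/\theta$; the printed form with $\mathbf{E}(X_T)$ is precisely the \emph{sub}martingale maximal inequality just derived (consistent with the sentence preceding the lemma), and in the martingale case invoked later in the paper the two coincide since $\mathbf{E}(X_0)=\mathbf{E}(X_T)$.

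The main obstacle is the continuous-time bookkeeping rather than any deep idea: one must justify that $\tau$ is genuinely a stopping time, that $\sup_{0\le t\le T}X_t$ is a legitimate (measurable) random variable, and that $X_\tau\ge\theta$ really holds on $A$ — all of which rest on path regularity — and a fully rigorous optional-sampling step is typically carried out by approximating $\tau$ along a dyadic mesh, applying the discrete-time (super/sub)martingale sampling theorem, and passing to the limit using right-continuity and nonnegativity. Since the lemma is included only "for completeness," I would cite Karatzas--Shreve for these measure-theoretic details instead of reproducing them.
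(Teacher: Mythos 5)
The paper offers no proof of this lemma at all --- it is stated as a classical result with a citation to Karatzas--Shreve --- so there is nothing internal to compare against; your optional-stopping argument is the standard textbook proof and is correct as a derivation of the \emph{submartingale} maximal inequality. Your most valuable observation is the one you flag yourself: the lemma's hypothesis (nonnegative supermartingale) and its conclusion (bound by $\mathbf{E}(X_T)/\theta$) do not match, and indeed the paper's surrounding text wavers between ``submartingale'' and ``supermartingale.'' For a genuine supermartingale the stated bound is false in general (take a nonnegative supermartingale started at $1$ whose expectation decays by time $T$: the supremum exceeds $1$ almost surely while $\mathbf{E}(X_T)$ is small), and the valid conclusion is $Pr(A) \leq \mathbf{E}(X_0)/\theta$, which is exactly the form the paper actually invokes in the proof of Proposition~1 (inequality~(13) bounds by $D_i(\overline{\mathbf{x}}_i(t_0))/\lambda$). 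So your diagnosis is right and the downstream use of the lemma is unaffected. One small correction to your wording: for a supermartingale $\mathbf{E}(X_0) \geq \mathbf{E}(X_T)$, so $\mathbf{E}(X_0)/\theta$ is the \emph{weaker} (but correct) bound, not the sharper one; the as-printed $\mathbf{E}(X_T)/\theta$ would be sharper if it were true. Your handling of the path-regularity issues (measurability of the supremum, $X_\tau \geq \theta$ on the exceedance event) is appropriately justified by the continuity assumed in Definition~1, and deferring the dyadic-approximation details of optional sampling to the cited reference is reasonable for a lemma included only for completeness.
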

 
\begin{proposition}
\label{prop:barrier-safety}
Suppose there exists a function $D_i(\overline{\mathbf{x}}_i)$ such that
\begin{align}
\label{eq:barrier-safety1}
D_i(\overline{\mathbf{x}}_0) &\leq \epsilon_s \\
\label{eq:barrier-safety2}
D_i(\overline{\mathbf{x}}_i) &\geq 1 \ \forall \overline{\mathbf{x}}_i(t) \in \overline{U}_i \\
\label{eq:barrier-safety3}
D_i(\overline{\mathbf{x}}_i) &\geq 0 \ \forall \overline{\mathbf{x}}_i \\
\frac{\partial D_i}{\partial \overline{\mathbf{x}}_i}(\overline{A}_i\overline{\mathbf{x}}_i(t) + \overline{B}\hat{\mathbf{u}}_i(t)) + \frac{\partial D_i}{\partial t} &+ \frac{1}{2}\mathbf{tr}(\Lambda^{T}_i\frac{\partial^{2}D_i}{\partial \overline{\mathbf{x}}_i^{2}}\Lambda_i) \nonumber\\
\label{eq:barrier-safety4}
&\leq 0 \ \forall \overline{\mathbf{x}}_i, ||\hat{\mathbf{u}}_i||_{2} \leq \gamma_i^{s}
\end{align}
Then $Pr\left(\bigcup_{t \in [0,T]}{\{\overline{\mathbf{x}}_i(t) \in \overline{U}_i\}}\right) \leq \epsilon_s$.
\end{proposition}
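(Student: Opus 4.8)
The plan is to construct an explicit nonnegative supermartingale from the barrier function $D_i$ and then invoke Doob's inequality (Lemma~\ref{lemma:doob_martingale}). Concretely, I would set $X_t := D_i(\overline{\mathbf{x}}_i(t))$, where $\overline{\mathbf{x}}_i(t)$ evolves according to the extended SDE $\dot{\overline{\mathbf{x}}}_i = \overline{A}_i \overline{\mathbf{x}}_i + \overline{B}\hat{\mathbf{u}}_i + \overline{F}\mathbf{w} + \overline{G}_i \mathbf{v}_{\alpha,i} + p(t)$ driven by the standard Wiener processes underlying $\mathbf{w}$ and $\mathbf{v}_{\alpha,i}$. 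Since $D_i \geq 0$ everywhere by~\eqref{eq:barrier-safety3}, $X_t$ is nonnegative; the crux is showing it is a supermartingale for \emph{every} admissible adversary policy $\nu$, i.e. every choice of $\hat{\mathbf{u}}_i(t)$ with $\|\hat{\mathbf{u}}_i(t)\|_2 \leq \gamma_i^s$.

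First I would apply It\^o's formula to $X_t = D_i(\overline{\mathbf{x}}_i(t), t)$. The drift term of $dX_t$ is exactly
\[
\frac{\partial D_i}{\partial \overline{\mathbf{x}}_i}\big(\overline{A}_i\overline{\mathbf{x}}_i(t) + \overline{B}\hat{\mathbf{u}}_i(t) + p(t)\big) + \frac{\partial D_i}{\partial t} + \frac{1}{2}\mathbf{tr}\!\Big(\Lambda_i^T \frac{\partial^2 D_i}{\partial \overline{\mathbf{x}}_i^2}\Lambda_i\Big),
\]
using that the diffusion coefficient of the extended state is precisely $\Lambda_i = \big(\begin{smallmatrix}\overline{F}N_{\mathbf{w}}\\ \overline{G}N_{\mathbf{v}_{\alpha,i}}\end{smallmatrix}\big)$ since $N_{\mathbf{w}}N_{\mathbf{w}}^T = \Sigma_{\mathbf{w}}$ and $N_{\mathbf{v}_{\alpha,i}}N_{\mathbf{v}_{\alpha,i}}^T = \Sigma_{\mathbf{v}_{\alpha,i}}$. (Here I am folding the deterministic affine term $p(t)$ into the generator; the statement~\eqref{eq:barrier-safety4} as written appears to absorb $p(t)$ into $\overline{A}_i\overline{\mathbf{x}}_i$, or one treats it as part of a time-augmented state — I would note this bookkeeping detail and proceed.) By hypothesis~\eqref{eq:barrier-safety4}, this drift is $\leq 0$ pointwise for all $\overline{\mathbf{x}}_i$ and all $\|\hat{\mathbf{u}}_i\|_2 \leq \gamma_i^s$. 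The remaining martingale part $\frac{\partial D_i}{\partial \overline{\mathbf{x}}_i}\Lambda_i\, dW_t$ has zero conditional mean. Integrating from $t$ to $s$ and taking conditional expectation gives $\mathbf{E}[X_s \mid \mathcal{F}_t] = X_t + \mathbf{E}\big[\int_t^s (\text{drift})\,d\tau \mid \mathcal{F}_t\big] \leq X_t$, so $X_t$ is a supermartingale regardless of the adversary's (adapted) choice of $\hat{\mathbf{u}}_i$.

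Next I would apply Lemma~\ref{lemma:doob_martingale} with $\theta = 1$: $Pr\big(\sup_{0 \leq t \leq T} X_t \geq 1\big) \leq \mathbf{E}(X_T) \leq \mathbf{E}(X_0) = D_i(\overline{\mathbf{x}}_0) \leq \epsilon_s$, where the middle inequality is the supermartingale property and the last is~\eqref{eq:barrier-safety1}. Finally, by~\eqref{eq:barrier-safety2}, whenever $\overline{\mathbf{x}}_i(t) \in \overline{U}_i$ we have $D_i(\overline{\mathbf{x}}_i(t)) \geq 1$, hence the event $\bigcup_{t \in [0,T]}\{\overline{\mathbf{x}}_i(t) \in \overline{U}_i\}$ is contained in the event $\{\sup_{0 \leq t \leq T} X_t \geq 1\}$; monotonicity of probability then yields $Pr\big(\bigcup_{t\in[0,T]}\{\overline{\mathbf{x}}_i(t)\in\overline{U}_i\}\big) \leq \epsilon_s$, completing the proof.

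The main obstacle is rigor in the supermartingale step: one must verify the technical conditions for It\^o's formula and for Doob's inequality to apply — e.g. sufficient smoothness/growth of $D_i$ so that the stochastic integral is a genuine martingale (not merely a local martingale) and the expectations are finite — and one must handle the fact that $\hat{\mathbf{u}}_i(t)$ is chosen adversarially as an adapted process, so the drift bound must hold uniformly over the admissible ball $\|\hat{\mathbf{u}}_i\|_2 \leq \gamma_i^s$, which is exactly what~\eqref{eq:barrier-safety4} provides. Since the statement says this result follows from Proposition~2 and Theorem~15 of~\cite{prajna2007framework}, I would lean on that reference for the measure-theoretic details and keep the argument at the level of the It\^o-plus-Doob sketch above.
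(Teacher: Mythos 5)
Your proposal is correct and takes essentially the same route as the paper's proof: both establish that $D_i(\overline{\mathbf{x}}_i(t))$ is a nonnegative supermartingale from the drift/generator condition~(\ref{eq:barrier-safety4}) (the paper invokes Dynkin's formula where you apply It\^o's formula directly), then apply Doob's inequality with threshold $1$ and the containment of the unsafe event given by~(\ref{eq:barrier-safety2}). Your side remark about the affine term $p(t)$ being absorbed into the generator is a reasonable bookkeeping observation but does not change the argument.
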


\begin{proof}
According to the definition, the differential generator of the extended system can be written as
\begin{equation*}
    \label{eq:generator}
    AB(\overline{\mathbf{x}}_i) = \frac{\partial D_i}{\partial \overline{\mathbf{x}}_i}(\overline{A}_i\overline{\mathbf{x}}_i(t) + \overline{B}\hat{\mathbf{u}}_i(t)) + \frac{\partial D_i}{\partial t} + \frac{1}{2}\mathbf{tr}\left(\Lambda_i^{T}\frac{\partial^{2}D_i}{\partial\overline{\mathbf{x}}_i^{2}}\Lambda_i\right)
\end{equation*}
Based on Dynkin's formula and inequality~(\ref{eq:barrier-safety4}), we have
\begin{equation*}
    \begin{array}{ll}
        &\quad\mathbf{E}(D_i(\overline{\mathbf{x}}_i(t)) | \overline{\mathbf{x}}_i(s)) \\
        &= D_i(\overline{\mathbf{x}}_i(s))+ \mathbf{E}\left[\int_{s}^{t}{AB(\overline{\mathbf{x}}_i(\tau)) \ d\tau} | \overline{\mathbf{x}}_i(s)\right] \nonumber\\
        &\leq D(\overline{\mathbf{x}}_i(s)), for \ t \geq s.
    \end{array}
\end{equation*}
Thus, $D_i(\overline{\mathbf{x}}_i(t))$ is a supermartingale. By Doob's martingale inequality and (\ref{eq:barrier-safety3}), we get
\begin{equation}
\label{eq:Doob's}
    Pr\left(\sup_{t \in [0,T]}{D_i(\overline{\mathbf{x}}_i(t))} \geq \lambda \ | \ \overline{\mathbf{x}}_i(t_0) \right) \leq \frac{D_i(\overline{\mathbf{x}}_i(t_0))}{\lambda}
\end{equation}
By inequality~(\ref{eq:barrier-safety2}), we have
\begin{align}
\label{eq:unsafe}
    &Pr\left(\bigcup_{t \in [0,T]}{\{\overline{\mathbf{x}}_i(t) \in \overline{U}_i\}}\right) \leq Pr\left(\bigcup_{t \in [0,T]}{\{D_i(\overline{\mathbf{x}}_i(t)) \geq 1\}}\right) \nonumber\\
    = &Pr\left(\sup_{t \in [0,T]}{D_i(\overline{\mathbf{x}}_i(t))} \geq 1\right).
\end{align}
Combining inequalities~(\ref{eq:barrier-safety1}), (\ref{eq:Doob's}), and (\ref{eq:unsafe}) we get
\begin{equation}
    Pr\left(\bigcup_{t \in [0,T]}{\{\overline{\mathbf{x}}_i(t) \in \overline{U}_i\}}\right) \leq \frac{D_i(\overline{\mathbf{x}}_i(t_0))}{1} \leq \epsilon_s.
\end{equation}
\end{proof}

Proposition~\ref{prop:barrier-safety} shows that if $\mathcal{A}^{\ast} = \mathcal{A}_i$ and there exists a barrier function $D_i(\overline{\mathbf{x}}_i)$ which satisfies inequalities~(\ref{eq:barrier-safety1})-(\ref{eq:barrier-safety4}) with given $\gamma^s_i$, the probability for safety of all trajectories starting from $\mathbf{x}_0$ is guaranteed by given lower bound $\epsilon_s$. The barrier function can be calculated via the sum-of-squares (SOS) optimization \cite{prajna2002introducing}. 

In order to select $\gamma^s_i$, the inequalities~(\ref{eq:barrier-safety1})-(\ref{eq:barrier-safety4}) in Proposition~\ref{prop:barrier-safety} need to be revised so that all the constraints have the form of polynomial SOS.

Define $g_{s_i}(\hat{\mathbf{u}}_i) = (\gamma^s_i)^{2}-\|\hat{\mathbf{u}}_i\|_{2}^{2},$
so that $\|\hat{\mathbf{u}}_i\|_{2} \leq \gamma^s_i$ is equivalent to $g_{s_i}(\hat{\mathbf{u}}_i) \geq 0$.

\begin{proposition}
\label{prop:barrier-safety-sos}
Suppose that there exist polynomials $\lambda_{U_i}(\overline{\mathbf{x}}_i)$, $\lambda_{D_i}(\overline{\mathbf{x}}_i, \hat{\mathbf{u}}_i)$, and $D_i(\overline{\mathbf{x}}_i)$ such that the following hold:
\begin{align}
\label{eq:barrier-safety-sos1}
-D_i(\overline{\mathbf{x}}_{0}) + \epsilon_s &\geq 0 \\
\label{eq:barrier-safety-sos2}
D_i(\overline{\mathbf{x}}_i) - 1 - \lambda_{U_i}^{T}(\overline{\mathbf{x}}_i)g_{U_i}(\overline{\mathbf{x}}_i) &\geq 0 \\
\label{eq:barrier-safety-sos3}
D_i(\overline{\mathbf{x}}_i) &\geq 0 \\
\nonumber
-\frac{\partial D_i}{\partial \overline{\mathbf{x}}_i}(\overline{A}_i\overline{\mathbf{x}}_i(t) + \overline{B}\hat{\mathbf{u}}_i(t)) - \lambda_{D_i}^{T}(\overline{\mathbf{x}}_i,\hat{\mathbf{u}}_i)g_{s_i}(\hat{\mathbf{u}}_i&) \\
\label{eq:barrier-safety-sos4}
 - \frac{\partial D_i}{\partial t} - \frac{1}{2}\mathbf{tr}(\Lambda_i^{T}\frac{\partial^{2}D_i}{\partial \overline{\mathbf{x}}_i^{2}}\Lambda_i) &\geq 0 \\
\label{eq:barrier-safety-sos5}
\lambda_{U_i}(\overline{\mathbf{x}}_i) \geq 0, \lambda_{D_i}(\overline{\mathbf{x}}_i,\hat{\mathbf{u}}_i) &\geq 0
\end{align}
Then $Pr\left(\bigcup_{t \in [0,T]}{\{\overline{\mathbf{x}}_i(t) \in \overline{U}_i\}}\right) \leq \epsilon_s$.
\end{proposition}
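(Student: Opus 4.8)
The plan is not to rerun the martingale argument of Proposition~\ref{prop:barrier-safety}, but to show that the polynomial (S-procedure / Positivstellensatz-type) certificates assumed here imply the four hypotheses~\eqref{eq:barrier-safety1}--\eqref{eq:barrier-safety4} of Proposition~\ref{prop:barrier-safety}, after which the conclusion is immediate. The mechanism is the standard one: a set-membership constraint $\overline{\mathbf{x}}_i \in \overline{U}_i$ (equivalently $g_{U_i}(\overline{\mathbf{x}}_i) \geq 0$ componentwise) or $\|\hat{\mathbf{u}}_i\|_2 \leq \gamma_i^s$ (equivalently $g_{s_i}(\hat{\mathbf{u}}_i) \geq 0$) is relaxed by adding a nonnegative-multiplier term $\lambda^{T}(\cdot)\,g(\cdot)$, and on the set where $g \geq 0$ this term has a sign that lets us drop it in the desired direction.

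Concretely, I would argue as follows. Inequality~\eqref{eq:barrier-safety-sos1} is, after transposing $\epsilon_s$, exactly~\eqref{eq:barrier-safety1}, and~\eqref{eq:barrier-safety-sos3} is exactly~\eqref{eq:barrier-safety3}, so these two transfer with no work. For~\eqref{eq:barrier-safety2}: on $\overline{U}_i$ we have $g_{U_i}(\overline{\mathbf{x}}_i) \geq 0$ and, by~\eqref{eq:barrier-safety-sos5}, $\lambda_{U_i}(\overline{\mathbf{x}}_i) \geq 0$, so $\lambda_{U_i}^{T}(\overline{\mathbf{x}}_i) g_{U_i}(\overline{\mathbf{x}}_i) \geq 0$; then~\eqref{eq:barrier-safety-sos2} gives $D_i(\overline{\mathbf{x}}_i) \geq 1 + \lambda_{U_i}^{T} g_{U_i} \geq 1$ on $\overline{U}_i$, which is~\eqref{eq:barrier-safety2}. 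For the drift/supermartingale condition~\eqref{eq:barrier-safety4}: whenever $\|\hat{\mathbf{u}}_i\|_2 \leq \gamma_i^s$ we have $g_{s_i}(\hat{\mathbf{u}}_i) = (\gamma_i^s)^2 - \|\hat{\mathbf{u}}_i\|_2^2 \geq 0$ and, again by~\eqref{eq:barrier-safety-sos5}, $\lambda_{D_i}(\overline{\mathbf{x}}_i,\hat{\mathbf{u}}_i) \geq 0$, so $\lambda_{D_i}^{T} g_{s_i} \geq 0$; rearranging~\eqref{eq:barrier-safety-sos4} and discarding this nonnegative subtracted term yields $\frac{\partial D_i}{\partial \overline{\mathbf{x}}_i}(\overline{A}_i\overline{\mathbf{x}}_i(t) + \overline{B}\hat{\mathbf{u}}_i(t)) + \frac{\partial D_i}{\partial t} + \frac{1}{2}\mathbf{tr}(\Lambda_i^{T}\frac{\partial^{2}D_i}{\partial\overline{\mathbf{x}}_i^{2}}\Lambda_i) \leq -\lambda_{D_i}^{T}g_{s_i}(\hat{\mathbf{u}}_i) \leq 0$ for all $\overline{\mathbf{x}}_i$ and all $\hat{\mathbf{u}}_i$ with $\|\hat{\mathbf{u}}_i\|_2 \leq \gamma_i^s$, which is~\eqref{eq:barrier-safety4}. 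Having verified~\eqref{eq:barrier-safety1}--\eqref{eq:barrier-safety4}, Proposition~\ref{prop:barrier-safety} gives $Pr(\bigcup_{t\in[0,T]}\{\overline{\mathbf{x}}_i(t) \in \overline{U}_i\}) \leq \epsilon_s$.

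I do not expect a genuine obstacle here; the proof is bookkeeping with the S-procedure. The only points that require attention are: (i) the inequalities~\eqref{eq:barrier-safety-sos2} and~\eqref{eq:barrier-safety-sos4} are posed globally (for all $\overline{\mathbf{x}}_i$, resp. all $\overline{\mathbf{x}}_i$ and $\hat{\mathbf{u}}_i$), whereas the Proposition~\ref{prop:barrier-safety} hypotheses are only needed on the restricted sets $\overline{U}_i$ and $\{\|\hat{\mathbf{u}}_i\|_2 \leq \gamma_i^s\}$, so one must check that the multiplier terms carry the correct sign precisely on those restricted sets, which is exactly where nonnegativity of $\lambda_{U_i}$ and $\lambda_{D_i}$ (condition~\eqref{eq:barrier-safety-sos5}) together with the sign of $g_{U_i}$ and $g_{s_i}$ on those sets is used; and (ii) $D_i$, $\lambda_{U_i}$, $\lambda_{D_i}$ must be smooth enough (they are polynomials, hence $C^2$) for Dynkin's formula inside Proposition~\ref{prop:barrier-safety} to be applicable, so restricting to polynomial certificates is what makes the reduction clean.
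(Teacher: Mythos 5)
Your proposal is correct and follows essentially the same route as the paper's own proof: both reduce the SOS conditions~(\ref{eq:barrier-safety-sos1})--(\ref{eq:barrier-safety-sos5}) to the hypotheses~(\ref{eq:barrier-safety1})--(\ref{eq:barrier-safety4}) of Proposition~\ref{prop:barrier-safety} by using the nonnegativity of the multipliers $\lambda_{U_i}$ and $\lambda_{D_i}$ together with the sign of $g_{U_i}$ and $g_{s_i}$ on $\overline{U}_i$ and on $\{\|\hat{\mathbf{u}}_i\|_2 \leq \gamma_i^s\}$, respectively, and then invoke Proposition~\ref{prop:barrier-safety}. Your added remarks on where exactly the restricted-set sign conditions are used and on the smoothness of polynomial certificates are consistent with, and slightly more explicit than, the paper's argument.
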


\begin{proof}
Inequalities~(\ref{eq:barrier-safety-sos1}) and (\ref{eq:barrier-safety-sos3}) imply inequalities~(\ref{eq:barrier-safety1}) and (\ref{eq:barrier-safety3}). If inequalities~(\ref{eq:barrier-safety-sos2}) and (\ref{eq:barrier-safety-sos5}) hold, we have $D_i(\overline{\mathbf{x}}_i) - 1 \geq \lambda_{U_i}^{T}(\overline{\mathbf{x}}_i)g_{U_i}(\overline{\mathbf{x}}_i)  \geq 0.$ This means inequality~(\ref{eq:barrier-safety2}) holds when $\overline{\mathbf{x}}_i(t) \in \overline{U}_i$. If inequalities~(\ref{eq:barrier-safety-sos4}) and (\ref{eq:barrier-safety-sos5}) hold, we get $\frac{\partial D_i}{\partial \overline{\mathbf{x}}_i}(\overline{A}_i\overline{\mathbf{x}}_i(t) + \overline{B}\hat{\mathbf{u}}_i(t)) + \frac{\partial D_i}{\partial t} + \frac{1}{2}\mathbf{tr}(\Lambda_i^{T}\frac{\partial^{2}D_i}{\partial \overline{\mathbf{x}}_i^{2}}\Lambda_i)   \leq - \lambda_{D_i}^{T}(\overline{\mathbf{x}}_i,\hat{\mathbf{u}}_i)g_{s_i}(\hat{\mathbf{u}}_i) \leq 0 $ when $\|\hat{\mathbf{u}}_i\|_{2} \leq \gamma_i,$ which implies that inequality~(\ref{eq:barrier-safety4}) holds. Hence, Proposition~\ref{prop:barrier-safety} holds, and the probability that the extended system state is in the unsafe region is upper-bounded by $\epsilon_s.$ 
\end{proof}

The barrier function $D_i(\overline{\mathbf{x}}_i)$ in inequalities~(\ref{eq:barrier-safety-sos1})-(\ref{eq:barrier-safety-sos5}) can be calculated via SOS optimization. By checking the existence of $D_i(\overline{\mathbf{x}}_i)$ under a given $\gamma^s_i$, whether the $\gamma^s_i$ satisfies safety and reachability constraints or not can be decided. 
The reachability constraint can be regarded as the safety constraint that only need to be kept at the final time step. Thus, time is also regarded as a state variable of a barrier function $D_i^{'}(\overline{\mathbf{x}}_i, t)$. The unsafe region is defined as $(\mathbb{R}^{2n} \setminus \overline{G}_i) \times \{T\}$. The Proposition~\ref{prop:barrier-reachability} can be derived by a similar way with Proposition~\ref{prop:barrier-safety}. 

\begin{proposition}
\label{prop:barrier-reachability}
Suppose there exists a function $D_i^{'}(\overline{\mathbf{x}}_i, t)$ such that
\begin{align}
\label{eq:barrier-reachability1}
D_i^{'}(\overline{\mathbf{x}}(t_0),0) &\leq \epsilon_r \\
\label{eq:barrier-reachability2}
D_i^{'}(\overline{\mathbf{x}}_i,T) &\geq 1 \ \forall \overline{\mathbf{x}}_i \in (\mathbb{R}^{2n} \setminus \overline{G}_i) \\
\label{eq:barrier-reachability3}
D_i^{'}(\overline{\mathbf{x}}_i,t) &\geq 0 \ \forall \overline{\mathbf{x}}_i, t \\
\frac{\partial D_i^{'}}{\partial \overline{\mathbf{x}}_i}(\overline{A}_i\overline{\mathbf{x}}_i(t) + \overline{B}\hat{\mathbf{u}}_i(t)) &+ \frac{\partial D_i^{'}}{\partial t} + \frac{1}{2}\mathbf{tr}(\Lambda_i^{T}\frac{\partial^{2}D_i^{'}}{\partial \overline{\mathbf{x}}_i^{2}}\Lambda_i)\nonumber\\
\label{eq:barrier-reachability4}
&\leq 0 \ \forall \overline{\mathbf{x}}_i,t, \|\hat{\mathbf{u}}_i\|_{2} \leq \gamma_i^{r}
\end{align}
Then $Pr\left({\overline{\mathbf{x}}_i(T) \in \overline{G}_i}\right) \geq 1-\epsilon_r$.
\end{proposition}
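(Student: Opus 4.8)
The plan is to reproduce the argument of Proposition~\ref{prop:barrier-safety} almost verbatim, with the single structural change that the time variable $t$ is appended to the extended state, so that the ``unsafe set'' we must avoid is the terminal-time target-complement $(\mathbb{R}^{2n}\setminus\overline{G}_i)\times\{T\}$. First I would write down the differential generator of the time-augmented process $(\overline{\mathbf{x}}_i(t),t)$ acting on $D_i^{'}$, which is exactly the left-hand side of~(\ref{eq:barrier-reachability4}):
\begin{equation*}
\mathcal{L}D_i^{'}(\overline{\mathbf{x}}_i,t) = \frac{\partial D_i^{'}}{\partial \overline{\mathbf{x}}_i}\big(\overline{A}_i\overline{\mathbf{x}}_i(t)+\overline{B}\hat{\mathbf{u}}_i(t)\big) + \frac{\partial D_i^{'}}{\partial t} + \frac{1}{2}\mathbf{tr}\!\left(\Lambda_i^{T}\frac{\partial^{2}D_i^{'}}{\partial\overline{\mathbf{x}}_i^{2}}\Lambda_i\right).
\end{equation*}
Because the control input always obeys $\mathbf{u}(t)\in\mathcal{U}(t)\subseteq\mathcal{U}_{\gamma_i}(t)$, the induced disturbance satisfies $\|\hat{\mathbf{u}}_i(t)\|_2=\|\mathbf{u}(t)-\mathbf{u}_{\alpha,i}(t)\|_2\le\gamma_i^{r}$ along every feasible trajectory, so~(\ref{eq:barrier-reachability4}) gives $\mathcal{L}D_i^{'}(\overline{\mathbf{x}}_i(t),t)\le0$ on all such trajectories.

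Next, applying Dynkin's formula exactly as in the proof of Proposition~\ref{prop:barrier-safety}, I would get, for $t\ge s$,
\begin{equation*}
\mathbf{E}\big(D_i^{'}(\overline{\mathbf{x}}_i(t),t)\,\big|\,\overline{\mathbf{x}}_i(s)\big) = D_i^{'}(\overline{\mathbf{x}}_i(s),s) + \mathbf{E}\!\left[\int_s^{t}\mathcal{L}D_i^{'}(\overline{\mathbf{x}}_i(\tau),\tau)\,d\tau\,\bigg|\,\overline{\mathbf{x}}_i(s)\right] \le D_i^{'}(\overline{\mathbf{x}}_i(s),s),
\end{equation*}
so $\{D_i^{'}(\overline{\mathbf{x}}_i(t),t)\}_{t\in[0,T]}$ is a supermartingale, and it is nonnegative by~(\ref{eq:barrier-reachability3}). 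Then Lemma~\ref{lemma:doob_martingale} (Doob's inequality) with $\theta=1$, together with the initial bound~(\ref{eq:barrier-reachability1}), yields
\begin{equation*}
Pr\!\left(\sup_{t\in[0,T]}D_i^{'}(\overline{\mathbf{x}}_i(t),t)\ge1\,\bigg|\,\overline{\mathbf{x}}_i(t_0)\right) \le \frac{D_i^{'}(\overline{\mathbf{x}}_i(t_0),0)}{1}\le\epsilon_r.
\end{equation*}
Finally I would connect the failure event $\{\overline{\mathbf{x}}_i(T)\notin\overline{G}_i\}$ to this crossing event: by~(\ref{eq:barrier-reachability2}), if $\overline{\mathbf{x}}_i(T)\notin\overline{G}_i$ then $D_i^{'}(\overline{\mathbf{x}}_i(T),T)\ge1$, and since the supremum over $[0,T]$ dominates the value at $T$ this forces $\sup_{t\in[0,T]}D_i^{'}(\overline{\mathbf{x}}_i(t),t)\ge1$. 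Hence $\{\overline{\mathbf{x}}_i(T)\notin\overline{G}_i\}\subseteq\{\sup_{t}D_i^{'}\ge1\}$, so $Pr(\overline{\mathbf{x}}_i(T)\notin\overline{G}_i)\le\epsilon_r$ and therefore $Pr(\overline{\mathbf{x}}_i(T)\in\overline{G}_i)\ge1-\epsilon_r$.

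The generator computation and Dynkin's-formula step carry over unchanged from Proposition~\ref{prop:barrier-safety}, so they are routine. The one genuinely new point requiring care is the reduction of a purely \emph{terminal-time} reachability requirement to an \emph{all-time} supremum bound: because $D_i^{'}$ is only forced to exceed $1$ on $\mathbb{R}^{2n}\setminus\overline{G}_i$ at $t=T$ (and not on an unsafe region at every time, as in the safety case), I must verify that the inclusion $\{\overline{\mathbf{x}}_i(T)\notin\overline{G}_i\}\subseteq\{\sup_t D_i^{'}\ge1\}$ still holds — which it does, trivially, since the supremum bounds the value at $T$. A secondary subtlety is checking that appending the deterministic clock component $\dot t=1$ keeps the process a diffusion to which Dynkin's formula applies; this is immediate, as the clock contributes no quadratic variation and its drift contribution is exactly the $\partial D_i^{'}/\partial t$ term already appearing in~(\ref{eq:barrier-reachability4}).
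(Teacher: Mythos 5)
Your proof is correct and follows exactly the route the paper intends: the paper omits a detailed proof of Proposition~\ref{prop:barrier-reachability}, stating only that it is ``derived by a similar way with Proposition~\ref{prop:barrier-safety}'' by treating time as a state variable and taking $(\mathbb{R}^{2n}\setminus\overline{G}_i)\times\{T\}$ as the unsafe region, and your time-augmented generator, Dynkin/supermartingale step, Doob bound, and the inclusion $\{\overline{\mathbf{x}}_i(T)\notin\overline{G}_i\}\subseteq\{\sup_{t}D_i^{'}\geq 1\}$ fill in precisely that argument. No gaps.
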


Defining $g_{r_i}(\hat{\mathbf{u}}_i),$ $\lambda_{G_i}(\overline{\mathbf{x}}_i),$ and $\lambda_{D_i}^{\prime}(\overline{\mathbf{x}}_i, \hat{\mathbf{u}}_i, t)$ in a similar way, we revise Proposition~\ref{prop:barrier-reachability} and get
\begin{align}
\label{eq:barrier-reachability-sos1}
-D_i^{\prime}(\overline{\mathbf{x}}_{0}, 0) + \epsilon_r &\geq 0 \\
\label{eq:barrier-reachability-sos2}
D_i^{\prime}(\overline{\mathbf{x}}_i,T) - 1 - \lambda_{G_i}^{T}(\overline{\mathbf{x}}_i)g_{G_i}(\overline{\mathbf{x}}_i) &\geq 0 \\
\label{eq:barrier-reachability-sos3}
D_i^{\prime}(\overline{\mathbf{x}}_i,t) &\geq 0 \\
-\frac{\partial D_i^{'}}{\partial \overline{\mathbf{x}}_i}(\overline{A}_i\overline{\mathbf{x}}_i(t) + \overline{B}\hat{\mathbf{u}}_i(t)) - \lambda_{D_i}^{\prime T}(\overline{\mathbf{x}}_i, \hat{\mathbf{u}}_i, t)g_{r_i}(\hat{\mathbf{u}}_i) &- \frac{\partial D_i^{'}}{\partial t}
\nonumber \\
\label{eq:barrier-reachability-sos4}
- \frac{1}{2}\mathbf{tr}\left(\Lambda_i^{T}\frac{\partial^{2}D_i^{\prime}}{\partial \overline{\mathbf{x}}_i^{2}}\Lambda_i\right) &\geq 0 \\
\label{eq:barrier-reachability-sos5}
\lambda_{G_i}(\overline{\mathbf{x}}_i) \geq 0, \lambda_{D_i}^{\prime}(\overline{\mathbf{x}}_i,\hat{\mathbf{u}}_i, t) &\geq 0 
\end{align}

Based on Proposition~\ref{prop:barrier-safety} and \ref{prop:barrier-reachability}, the safety constraint and reachability constraint at each time $t$ are two balls with identical center $\mathbf{u}_{\alpha,i}(t)$ and different radii $\gamma_i^s$ and $\gamma_i^r$
\begin{equation*}
    \begin{array}{cc}
        \text{Safety}&:||\mathbf{u}(t) - \mathbf{u}_{\alpha,i}(t)||_{2} \leq \gamma_i^s \\
        \text{Reachability}&:||\mathbf{u}(t) - \mathbf{u}_{\alpha,i}(t)||_{2} \leq \gamma_i^r 
    \end{array}
\end{equation*}
By staying within a ball with smaller radius, both safety and reachability will be satisfied. The new QCQP is
 \begin{align}
  \label{eq:QCQP-final}
  \begin{split}
      \min_{\mathbf{u}(t)} \ & \mathbf{u}(t)^{T}R\mathbf{u}(t) + \hat{\mathbf{x}}(t)^{T}P(t)B\mathbf{u}(t) + \mathbf{s}(t)^{T}B\mathbf{u}(t) \\
 \mbox{s.t.} \ & \left(\mathbf{u}(t) - \mathbf{u}_{\alpha,i}(t)\right)^T\left(\mathbf{u}(t) - \mathbf{u}_{\alpha,i}(t)\right) \leq \gamma_i^2, i \in \{1,\ldots,q\}
  \end{split}
 \end{align}
where $\gamma_i = \min \{\gamma_i^s, \gamma_i^r\}$. 

We demonstrate the relationship between the variation of $\gamma_i^s$ and the satisfiability of safety as follows.
\begin{Lemma}
\label{lemma:monotonicity-gamma-safety}
    For all $\gamma_i^s > 0$, if there exists a function $D_i(\overline{\mathbf{x}}_i)$ such that inequalities~(\ref{eq:barrier-safety-sos1})-(\ref{eq:barrier-safety-sos5}) hold, then for all $\hat{\gamma}_i^{s} < \gamma_i^s$, $D_i(\overline{\mathbf{x}}_i)$ satisfies inequalities~(\ref{eq:barrier-safety-sos1})-(\ref{eq:barrier-safety-sos5}) as well.
\end{Lemma}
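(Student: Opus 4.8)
The plan is to reuse the \emph{same} certificate triple $\bigl(D_i,\lambda_{U_i},\lambda_{D_i}\bigr)$ that witnesses inequalities~(\ref{eq:barrier-safety-sos1})--(\ref{eq:barrier-safety-sos5}) at radius $\gamma_i^s$, and to verify directly that it still witnesses them at any smaller radius $\hat\gamma_i^s<\gamma_i^s$. First I would observe that (\ref{eq:barrier-safety-sos1}), (\ref{eq:barrier-safety-sos2}), (\ref{eq:barrier-safety-sos3}), and (\ref{eq:barrier-safety-sos5}) do not contain $\gamma_i^s$ at all: they involve only $D_i$, $\lambda_{U_i}$, $g_{U_i}$, and $\epsilon_s$. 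Hence these four conditions hold verbatim at radius $\hat\gamma_i^s$ with no change to the polynomials involved.

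Next I would isolate the single place where the radius enters, namely the term $\lambda_{D_i}^{T}(\overline{\mathbf{x}}_i,\hat{\mathbf{u}}_i)\,g_{s_i}(\hat{\mathbf{u}}_i)$ in (\ref{eq:barrier-safety-sos4}), where $g_{s_i}(\hat{\mathbf{u}}_i)=(\gamma_i^s)^2-\|\hat{\mathbf{u}}_i\|_2^2$. Writing $\hat g_{s_i}(\hat{\mathbf{u}}_i)=(\hat\gamma_i^s)^2-\|\hat{\mathbf{u}}_i\|_2^2$ for the constraint polynomial at the smaller radius, we have $\hat g_{s_i}(\hat{\mathbf{u}}_i)=g_{s_i}(\hat{\mathbf{u}}_i)-c$ with $c=(\gamma_i^s)^2-(\hat\gamma_i^s)^2>0$ a positive constant. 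Substituting into the left-hand side of (\ref{eq:barrier-safety-sos4}) gives
$$-\frac{\partial D_i}{\partial \overline{\mathbf{x}}_i}(\overline{A}_i\overline{\mathbf{x}}_i+\overline{B}\hat{\mathbf{u}}_i)-\lambda_{D_i}^{T}\hat g_{s_i}(\hat{\mathbf{u}}_i)-\frac{\partial D_i}{\partial t}-\tfrac12\mathbf{tr}\!\left(\Lambda_i^{T}\tfrac{\partial^2 D_i}{\partial\overline{\mathbf{x}}_i^2}\Lambda_i\right)=\Big[\text{LHS of }(\ref{eq:barrier-safety-sos4})\text{ at }\gamma_i^s\Big]+c\,\lambda_{D_i}.$$
By (\ref{eq:barrier-safety-sos5}) the multiplier $\lambda_{D_i}$ is SOS, and $c>0$, so $c\,\lambda_{D_i}$ is SOS; the first bracket is SOS by hypothesis; hence the sum is SOS, i.e.\ (\ref{eq:barrier-safety-sos4}) holds at radius $\hat\gamma_i^s$ with the same $D_i$ and $\lambda_{D_i}$. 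Combining with the first step, all of (\ref{eq:barrier-safety-sos1})--(\ref{eq:barrier-safety-sos5}) hold at $\hat\gamma_i^s$, which is the claim.

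I do not expect a genuine obstacle here; the only point needing care is the sign bookkeeping in the SOS relaxation — one must check that shrinking the radius \emph{adds} the nonnegative term $c\,\lambda_{D_i}$ rather than subtracting it, so that no fresh SOS certificate has to be constructed and the existing $\lambda_{D_i}$ suffices. It is worth remarking that the same manipulation with $\hat\gamma_i^s>\gamma_i^s$ would flip the sign of $c$ and break the argument, which is exactly why the monotonicity runs in the direction stated (smaller $\gamma_i^s$ only makes safety easier to certify), matching the intuition that shrinking the feasible control set cannot hurt the safety guarantee.
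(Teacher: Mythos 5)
Your proposal is correct; the paper omits this proof entirely, stating only that the result is ``straightforward,'' and your argument is precisely the intended one: conditions~(\ref{eq:barrier-safety-sos1})--(\ref{eq:barrier-safety-sos3}) and~(\ref{eq:barrier-safety-sos5}) are radius-independent, and shrinking the radius changes the left-hand side of~(\ref{eq:barrier-safety-sos4}) only by adding the nonnegative term $\bigl((\gamma_i^s)^2-(\hat\gamma_i^s)^2\bigr)\lambda_{D_i}$, so the same certificate $(D_i,\lambda_{U_i},\lambda_{D_i})$ works. The sign bookkeeping you flag as the only delicate point is handled correctly.
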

 
Similarly we present the relationship between the variation of $\gamma_i^r$ and the satisfiability of reachability as follows.
 
\begin{Lemma}
\label{lemma:monotonicity-gamma-reachability}
    For $\forall \gamma_i^r > 0$, if there exists a function $D_i^{\prime}(\overline{\mathbf{x}}_i, t)$ such that  inequalities~(\ref{eq:barrier-reachability-sos1})-(\ref{eq:barrier-reachability-sos5}) hold, then for $\forall \hat{\gamma}_i^{r} < \gamma_i^r$, $D_i^{\prime}(\overline{\mathbf{x}}_i, t)$ satisfies inequalities~(\ref{eq:barrier-reachability-sos1})-(\ref{eq:barrier-reachability-sos5}) as well.
\end{Lemma}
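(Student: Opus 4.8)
\emph{Proof proposal.} The plan is to reuse the existing certificate verbatim: take the barrier function $D_i^{\prime}(\overline{\mathbf{x}}_i, t)$ together with the multipliers $\lambda_{G_i}(\overline{\mathbf{x}}_i)$ and $\lambda_{D_i}^{\prime}(\overline{\mathbf{x}}_i, \hat{\mathbf{u}}_i, t)$ that witness (\ref{eq:barrier-reachability-sos1})--(\ref{eq:barrier-reachability-sos5}) at radius $\gamma_i^r$, and show that the same tuple witnesses (\ref{eq:barrier-reachability-sos1})--(\ref{eq:barrier-reachability-sos5}) once $\gamma_i^r$ is replaced by any $\hat{\gamma}_i^{r}$ with $0 \le \hat{\gamma}_i^{r} < \gamma_i^r$. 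The first step is a bookkeeping observation: $\gamma_i^r$ enters these five conditions only through the polynomial $g_{r_i}(\hat{\mathbf{u}}_i) = (\gamma_i^r)^2 - \|\hat{\mathbf{u}}_i\|_2^2$ appearing in (\ref{eq:barrier-reachability-sos4}); conditions (\ref{eq:barrier-reachability-sos1}), (\ref{eq:barrier-reachability-sos2}), (\ref{eq:barrier-reachability-sos3}), and (\ref{eq:barrier-reachability-sos5}) are independent of $\gamma_i^r$, so they remain valid at $\hat{\gamma}_i^{r}$ unchanged.

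For (\ref{eq:barrier-reachability-sos4}), I would write $\hat{g}_{r_i}(\hat{\mathbf{u}}_i) = (\hat{\gamma}_i^{r})^2 - \|\hat{\mathbf{u}}_i\|_2^2$ for the shrunk radius and use the identity $g_{r_i}(\hat{\mathbf{u}}_i) = \hat{g}_{r_i}(\hat{\mathbf{u}}_i) + c$ with $c \triangleq (\gamma_i^r)^2 - (\hat{\gamma}_i^{r})^2 > 0$ a positive constant. Substituting $g_{r_i} = \hat{g}_{r_i} + c$ into the left-hand side of (\ref{eq:barrier-reachability-sos4}) at radius $\gamma_i^r$ shows that the left-hand side of (\ref{eq:barrier-reachability-sos4}) at radius $\hat{\gamma}_i^{r}$, computed with the same $D_i^{\prime}$ and $\lambda_{D_i}^{\prime}$, equals the (SOS) left-hand side at radius $\gamma_i^r$ plus the extra term $c\,\lambda_{D_i}^{\prime}(\overline{\mathbf{x}}_i, \hat{\mathbf{u}}_i, t)$. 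Since $\lambda_{D_i}^{\prime}$ is SOS by (\ref{eq:barrier-reachability-sos5}) and $c > 0$, the term $c\,\lambda_{D_i}^{\prime}$ is SOS, and a sum of SOS polynomials is SOS; hence (\ref{eq:barrier-reachability-sos4}) holds at $\hat{\gamma}_i^{r}$. Together with the first paragraph, $D_i^{\prime}$ satisfies (\ref{eq:barrier-reachability-sos1})--(\ref{eq:barrier-reachability-sos5}) at radius $\hat{\gamma}_i^{r}$, which proves the lemma.

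This is the direct analogue of the safety monotonicity result, Lemma~\ref{lemma:monotonicity-gamma-safety}, with $g_{s_i}$ replaced by $g_{r_i}$ and the time variable $t$ carried through as an extra argument of the barrier and its multiplier; I would present it in that parallel form. There is no genuine obstacle in the argument. The only subtlety worth flagging is that the conclusion has to be drawn at the level of the SOS relaxation — one must actually exhibit a feasible multiplier for the smaller radius — rather than relying on the softer (and by itself insufficient) remark that $\{\hat{\mathbf{u}}_i : \|\hat{\mathbf{u}}_i\|_2 \le \hat{\gamma}_i^{r}\} \subseteq \{\hat{\mathbf{u}}_i : \|\hat{\mathbf{u}}_i\|_2 \le \gamma_i^r\}$ makes the constraint (\ref{eq:barrier-reachability4}) in Proposition~\ref{prop:barrier-reachability} easier to meet; the decomposition $g_{r_i} = \hat{g}_{r_i} + c$ together with the reuse of $\lambda_{D_i}^{\prime}$ is precisely what turns that remark into a valid SOS certificate.
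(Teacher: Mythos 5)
Your proof is correct: the paper omits the proof of this lemma entirely (it declares Lemmas~\ref{lemma:monotonicity-gamma-safety} and~\ref{lemma:monotonicity-gamma-reachability} straightforward), and your argument --- observing that $\gamma_i^r$ enters the conditions only through $g_{r_i}$ in (\ref{eq:barrier-reachability-sos4}), and that shrinking the radius adds the nonnegative term $\bigl((\gamma_i^r)^2-(\hat{\gamma}_i^{r})^2\bigr)\lambda_{D_i}^{\prime}$ to the left-hand side while reusing the same $D_i^{\prime}$ and multipliers --- is exactly the intended one. Your closing remark is also well taken: the set-inclusion observation alone certifies Proposition~\ref{prop:barrier-reachability} but not the SOS conditions, and the decomposition $g_{r_i}=\hat{g}_{r_i}+c$ is what closes that gap.
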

 
The results of Lemma~\ref{lemma:monotonicity-gamma-safety} and~\ref{lemma:monotonicity-gamma-reachability} are straightforward, so we omit the proofs for the compactness of the paper. 

By using binary search and the results of Lemma~\ref{lemma:monotonicity-gamma-safety} and~\ref{lemma:monotonicity-gamma-reachability}, we present  Algorithm~\ref{algo:barrier-certificate}, which is $\rho$-optimal to $\gamma_i^s$ and $\gamma_i^r$ (i.e. $|\gamma_i^s - \gamma_{max,i}^s| \leq \rho$ and $|\gamma_i^r - \gamma_{max,i}^r| \leq \rho$), where $\gamma_{max,i}^s$ and $\gamma_{max,i}^r$ are the maximal $\gamma_i^s$ and $\gamma_i^r$ for which there exist $\hat{D}_i(\overline{\mathbf{x}}_i)$ that satisfies inequalities~(\ref{eq:barrier-safety-sos1})-(\ref{eq:barrier-safety-sos5}) and $\hat{D}_i^{\prime}(\overline{\mathbf{x}}_i, t)$ that satisfies inequalities~(\ref{eq:barrier-reachability-sos1})-(\ref{eq:barrier-reachability-sos5}). Here we assume existence of a function SOS\_Feasible that takes a set of SOS constraints as input and returns 1 if there exist polynomials satisfying the constraints and $0$ otherwise.
 
\begin{algorithm}[h]
	\caption{Algorithm for computing the maximum parameter $\gamma_i$ that ensures safety and reachability under given set of compromised sensors $\mathcal{A}_i$.}
	\label{algo:barrier-certificate}
	\begin{algorithmic}[1]
		\Procedure{Barrier\_Certificate($\epsilon_s$, $\epsilon_r$, $\gamma_{0}$, $\mathcal{A}_i$)}{}
		\State \textbf{Input}: worst case probability of violating safety
		property $\epsilon_s$, worst case probability of violating reachability property $\epsilon_r$, initial upper bound of radii for feasible control input sets $\gamma_{0}$, compromised sensor set $\mathcal{A}_i$
		\State \textbf{Output}: radius of feasible control input set that satisfies safety and reachability properties $\gamma_i$
        \State $\underline{\gamma_i^s} \leftarrow 0$, $\overline{\gamma_i^s} \leftarrow \gamma_{0}$
        \State $\underline{\gamma_i^r} \leftarrow 0$, $\overline{\gamma_i^r} \leftarrow \gamma_{0}$
        
        \While{$|\underline{\gamma_i^s}-\overline{\gamma_i^s}| > \rho$}
        \State $\gamma_i^s \leftarrow (\underline{\gamma_i^s} + \overline{\gamma_i^s})/2$
        \State $flag \leftarrow$ SOS\_Feasible(Eq.~(\ref{eq:barrier-safety-sos1}), Eq.~(\ref{eq:barrier-safety-sos2}), Eq.~(\ref{eq:barrier-safety-sos3}), Eq.~(\ref{eq:barrier-safety-sos4}), Eq.~(\ref{eq:barrier-safety-sos5}), $\epsilon_s$, $\mathcal{A}_i$)
        \If{$flag==0$}
        \State $\overline{\gamma_i^s} \leftarrow \gamma_i^s$
        \Else
        \State $\underline{\gamma_i^s} \leftarrow \gamma_i^s$
        \EndIf
        \EndWhile
        
        \While{$|\underline{\gamma_i^r}-\overline{\gamma_i^r}| > \rho^{\prime}$}
        \State $\gamma_i^r \leftarrow (\underline{\gamma_i^r} + \overline{\gamma_i^r})/2$
        \State $flag \leftarrow$ SOS\_Feasible(Eq.~(\ref{eq:barrier-reachability-sos1}), Eq.~(\ref{eq:barrier-reachability-sos2}), Eq.~(\ref{eq:barrier-reachability-sos3}), Eq.~(\ref{eq:barrier-reachability-sos4}), Eq.~(\ref{eq:barrier-reachability-sos5}), $\epsilon_r$, $\mathcal{A}_i$)
        \If{$flag==0$}
        \State $\overline{\gamma_i^r} \leftarrow \gamma_i^r$
        \Else
        \State $\underline{\gamma_i^r} \leftarrow \gamma_i^r$
        \EndIf
        \EndWhile
        
        \State $\gamma_i \leftarrow \min \{\gamma_i^s, \gamma_i^r\}$
        \State \Return{$\gamma_i$}
            \EndProcedure
	\end{algorithmic}
\end{algorithm}

Since the controller does not know which $\mathcal{A}_i$ is $\mathcal{A}^{\ast},$ we let the control input $\mathbf{u}(t) \in \mathcal{U}(t) = \bigcap_{i = 1}^{q} \mathcal{U}_{\gamma_i}(t)$ to guarantee safety and reachability for all attack patterns $\{\mathcal{A}_i\}.$ However, it is possible that $\mathcal{U}(t) = \emptyset.$ Thus, we need a mechanism to find out feasible solutions when $\mathcal{U}(t) = \emptyset.$

\subsection{Selection of Constraints}
In this subsection, we present a policy to provide feasible $\mathbf{u}(t)$ when $\mathcal{U}(t) = \emptyset.$ Denote $\mathcal{I}(t)$ as the set of the indexes of the constraints $\mathcal{U}_{\gamma_i}(t).$ Define $\gamma_{min} = \min_{i} \gamma_i$, $i \in \{1,\ldots,q\}$. 
In order to identify those $\mathcal{U}_{\gamma_i}(t)$ which cause $\mathcal{U}(t) = \emptyset$, we first give a sufficient condition that $\mathcal{U}(t) \neq \emptyset$. We then express the sufficient condition in terms of the state estimates, and provide a method to select $\mathcal{I}(t)$ such that $\bigcap_{i \in \mathcal{I}(t)}{\mathcal{U}_{\gamma_{i}}(t)} \neq \emptyset$.

\begin{proposition}
\label{prop:no_conflict}
If there exists a ball of radius $\gamma_{min}$ such that $\mathbf{u}_{\alpha,i} \ i \in \{1,\ldots,q\}$ are contained in the ball, then $\mathcal{U}(t) \neq \emptyset.$
\end{proposition}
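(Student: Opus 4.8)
The plan is to exhibit an explicit feasible point for $\mathcal{U}(t)$, namely the center of the hypothesized ball. Let $\mathbf{c}$ denote the center of a ball of radius $\gamma_{min}$ that contains $\mathbf{u}_{\alpha,1}(t),\ldots,\mathbf{u}_{\alpha,q}(t)$; by hypothesis such a $\mathbf{c}$ exists, and membership of each $\mathbf{u}_{\alpha,i}(t)$ in this ball means $\|\mathbf{c} - \mathbf{u}_{\alpha,i}(t)\|_{2} \leq \gamma_{min}$ for every $i \in \{1,\ldots,q\}$.

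Next I would invoke the definition $\gamma_{min} = \min_{i} \gamma_i$, which gives $\gamma_{min} \leq \gamma_i$ for each $i$, and hence $\|\mathbf{c} - \mathbf{u}_{\alpha,i}(t)\|_{2} \leq \gamma_{min} \leq \gamma_i$ for all $i$. Recalling that $\mathcal{U}_{\gamma_i}(t) = \{\mathbf{u}(t) : (\mathbf{u}(t) - \mathbf{u}_{\alpha,i}(t))^{T}(\mathbf{u}(t) - \mathbf{u}_{\alpha,i}(t)) \leq \gamma_i^2\}$, i.e. $\|\mathbf{u}(t) - \mathbf{u}_{\alpha,i}(t)\|_{2} \leq \gamma_i$, this shows $\mathbf{c} \in \mathcal{U}_{\gamma_i}(t)$ for every $i$, so $\mathbf{c} \in \bigcap_{i=1}^{q} \mathcal{U}_{\gamma_i}(t) = \mathcal{U}(t)$, and therefore $\mathcal{U}(t) \neq \emptyset$.

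There is essentially no obstacle in the argument itself: the claim is an immediate consequence of the definitions, the single idea being that the center of any covering ball of radius $\gamma_{min}$ is automatically a common feasible point because its distance to each $\mathbf{u}_{\alpha,i}(t)$ is at most $\gamma_{min}$, which in turn is at most every $\gamma_i$. The substantive work is deferred to the remainder of the subsection, where this geometric sufficient condition is re-expressed in terms of the state estimates $\hat{\mathbf{x}}_{\alpha,i}(t)$ so that the inconsistency among estimates can be detected and the index set $\mathcal{I}(t)$ chosen so that $\bigcap_{i \in \mathcal{I}(t)} \mathcal{U}_{\gamma_i}(t) \neq \emptyset$; the proposition above is only the one-line containment fact that justifies that construction.
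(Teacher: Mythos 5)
Your proof is correct and is essentially identical to the paper's: both take the center of the covering ball as the explicit feasible point and observe that its distance to each $\mathbf{u}_{\alpha,i}(t)$ is at most $\gamma_{min} \leq \gamma_i$, placing it in every $\mathcal{U}_{\gamma_i}(t)$ and hence in $\mathcal{U}(t)$. No further comment is needed.
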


\begin{proof}
Suppose there exists such a ball with center $\mathbf{u}_0$. We have $\|\mathbf{u}_0 - \mathbf{u}_{\alpha,i}\|_2 \leq \gamma_{min} \leq \gamma_{i} \ \forall i \in \{1,\ldots,q\}$. Hence, $\mathbf{u}_0 \in \mathcal{U}(t)$, and $\mathcal{U}(t) \neq \emptyset.$
\end{proof}

In the following proposition we show the sufficient condition that the ball in Proposition~\ref{prop:no_conflict} exists, and thus $\mathcal{U}(t) \neq \emptyset$.

\begin{proposition}
\label{prop:ball_exist}
For all $i,j \in \{1,\ldots,q\}$, denote $\{\hat{i}, \hat{j}\} = \argmax_{i,j} \{d_{i,j}\}$, where $d_{i,j} = \|\mathbf{u}_{\alpha,i}-\mathbf{u}_{\alpha,j}\|_2$. If $d_{\hat{i}, \hat{j}} > 2\gamma_{min}$, the ball which satisfies Proposition~\ref{prop:no_conflict} does not exist. If $d_{\hat{i}, \hat{j}} \leq 2\gamma_{min}$, and $\|\mathbf{u}_{\alpha,k} - (\mathbf{u}_{\alpha,\hat{i}} + \mathbf{u}_{\alpha,\hat{j}})/2\|_2 \leq \gamma_{min} \ \forall k \in \{1,\ldots,q\} \setminus \{\hat{i}, \hat{j}\}$ holds, then there exists a ball that satisfies Proposition~\ref{prop:no_conflict}.
\end{proposition}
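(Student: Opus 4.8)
The plan is to prove the two assertions of Proposition~\ref{prop:ball_exist} separately, in both cases using only the triangle inequality; no heavy machinery is needed.

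For the first assertion, I would argue by contraposition. Suppose some ball of radius $\gamma_{min}$, centered at $\mathbf{u}_0$, contains every $\mathbf{u}_{\alpha,i}$. Then for each pair $i,j$ we have $d_{i,j} = \|\mathbf{u}_{\alpha,i} - \mathbf{u}_{\alpha,j}\|_2 \leq \|\mathbf{u}_{\alpha,i} - \mathbf{u}_0\|_2 + \|\mathbf{u}_0 - \mathbf{u}_{\alpha,j}\|_2 \leq 2\gamma_{min}$. Maximizing over $i,j$ gives $d_{\hat{i},\hat{j}} \leq 2\gamma_{min}$, which contradicts the hypothesis $d_{\hat{i},\hat{j}} > 2\gamma_{min}$. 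Hence no ball satisfying Proposition~\ref{prop:no_conflict} can exist.

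For the second assertion, I would exhibit an explicit center, namely the midpoint $\mathbf{u}_0 := \frac{1}{2}(\mathbf{u}_{\alpha,\hat{i}} + \mathbf{u}_{\alpha,\hat{j}})$, and verify $\|\mathbf{u}_{\alpha,k} - \mathbf{u}_0\|_2 \leq \gamma_{min}$ for every $k \in \{1,\ldots,q\}$. For $k = \hat{i}$ (and symmetrically $k=\hat{j}$), $\|\mathbf{u}_{\alpha,\hat{i}} - \mathbf{u}_0\|_2 = \frac{1}{2}\|\mathbf{u}_{\alpha,\hat{i}} - \mathbf{u}_{\alpha,\hat{j}}\|_2 = \frac{1}{2}d_{\hat{i},\hat{j}} \leq \gamma_{min}$, using the first half of the hypothesis. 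For $k \notin \{\hat{i},\hat{j}\}$, the required bound is exactly the second half of the hypothesis. Therefore all $q$ points lie in the closed ball of radius $\gamma_{min}$ about $\mathbf{u}_0$, which is a ball of the type described in Proposition~\ref{prop:no_conflict}; combined with that proposition it yields $\mathcal{U}(t) \neq \emptyset$.

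There is no genuinely hard step here; the only care needed is the bookkeeping that separates the diameter-realizing pair $\{\hat{i},\hat{j}\}$ from the remaining indices. It is worth remarking in the write-up that the second assertion is only a sufficiency statement: the midpoint of the diameter need not be the center of the minimum enclosing ball (for a near-equilateral configuration the minimum enclosing ball has radius strictly larger than half the diameter), which is precisely why the extra hypothesis on the $\mathbf{u}_{\alpha,k}$ cannot be omitted.
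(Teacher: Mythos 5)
Your proof is correct and follows essentially the same route as the paper's: the first assertion via the diameter bound (which you make explicit with the triangle inequality) and the second by taking the midpoint $(\mathbf{u}_{\alpha,\hat{i}} + \mathbf{u}_{\alpha,\hat{j}})/2$ as the center and checking the three cases. Your closing remark on the minimum enclosing ball is a sensible observation but not needed for the result.
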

 
\begin{proof}
If $d_{\hat{i}, \hat{j}} > 2\gamma_{min}$, then the distance between $\mathbf{u}_{\alpha,\hat{i}}$ and $\mathbf{u}_{\alpha,\hat{j}}$ is greater than the diameter of the ball in Proposition~\ref{prop:no_conflict}. Thus, there does not exist such ball that satisfies Proposition~\ref{prop:no_conflict}. If $d_{\hat{i}, \hat{j}} \leq 2\gamma_{min}$, then $\|\mathbf{u}_{\alpha,\hat{i}} - (\mathbf{u}_{\alpha,\hat{i}} + \mathbf{u}_{\alpha,\hat{j}})/2\|_2 = \|\mathbf{u}_{\alpha,\hat{j}} - (\mathbf{u}_{\alpha,\hat{i}} + \mathbf{u}_{\alpha,\hat{j}})/2\|_2 = \frac{1}{2}\|\mathbf{u}_{\alpha,\hat{i}} - \mathbf{u}_{\alpha,\hat{j}}\|_2 = \frac{1}{2}d_{\hat{i}, \hat{j}} \leq \gamma_{min}$. Since we also have $\|\mathbf{u}_{\alpha,k} - (\mathbf{u}_{\alpha,\hat{i}} + \mathbf{u}_{\alpha,\hat{j}})/2\|_2 \leq \gamma_{min} \ \forall k \in \{1,\ldots,q\} \setminus \{\hat{i}, \hat{j}\}$, we have that $\mathbf{u}_{\alpha,i}$ are in the ball with center $(\mathbf{u}_{\alpha,\hat{i}} + \mathbf{u}_{\alpha,\hat{j}})/2$ and radius $\gamma_{min}$ for all $i \in \{1,\ldots,q\}$.
\end{proof}

Proposition~\ref{prop:no_conflict} and \ref{prop:ball_exist} imply that if $\|\mathbf{u}_{\alpha,i} - (\mathbf{u}_{\alpha,\hat{i}} + \mathbf{u}_{\alpha,\hat{j}})/2\|_2 \leq \gamma_{min} \ \forall i \in \{1,\ldots,q\}$, then $\mathcal{U}(t) \neq \emptyset.$ 
By definition of $\mathbf{u}_{\alpha,i}(t),$ we rewrite $\|\mathbf{u}_{\alpha,i} - (\mathbf{u}_{\alpha,\hat{i}} + \mathbf{u}_{\alpha,\hat{j}})/2\|_2 \leq \gamma_{min} \ \forall i \in \{1,\ldots,q\}$ as
\begin{multline}
    \label{eq:triangle-inequality}
    \|\frac{1}{4}K(t)(\mathbf{\hat{x}}_{\alpha,i}(t) - \mathbf{\hat{x}}_{\alpha,\hat{i}}(t)) + \frac{1}{4}K(t)(\mathbf{\hat{x}}_{\alpha,i}(t) - \mathbf{\hat{x}}_{\alpha,\hat{j}}(t))\|_2 \\ \leq \gamma_{min}.
\end{multline}
where $K(t)$ is the KF gain. 

Hence, if $\mathcal{U}(t) = \emptyset$, then 
\begin{multline}
\label{eq:triangle-inequality-1}
    \|\frac{1}{4}K(t)(\mathbf{\hat{x}}_{\alpha,i}(t) - \mathbf{\hat{x}}_{\alpha,\hat{i}}(t)) + \frac{1}{4}K(t)(\mathbf{\hat{x}}_{\alpha,i}(t) - \mathbf{\hat{x}}_{\alpha,\hat{j}}(t))\|_2 \\
        > \gamma_{min}.
\end{multline}
In the next lemma, we split~(\ref{eq:triangle-inequality-1}) into two inequalities, with each containing only two estimates.

\begin{Lemma}
\label{lemma:triange-inequality}
If $\ \mathcal{U}(t) = \emptyset$, then either $\ \|\frac{1}{4}K(t)(\mathbf{\hat{x}}_{\alpha,i}(t) - \mathbf{\hat{x}}_{\alpha,\hat{i}}(t))\|_2 > \frac{1}{2}\gamma_{min}$ or $\|\frac{1}{4}K(t)(\mathbf{\hat{x}}_{\alpha,i}(t) - \mathbf{\hat{x}}_{\alpha,\hat{j}}(t))\|_2 > \frac{1}{2}\gamma_{min}$, or both of them hold. 
\end{Lemma}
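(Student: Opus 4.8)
The plan is to argue by contraposition, combining the consequence of Propositions~\ref{prop:no_conflict} and~\ref{prop:ball_exist} already recorded in~(\ref{eq:triangle-inequality-1}) with the triangle inequality for the Euclidean norm.

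First I would recall that, as noted just before~(\ref{eq:triangle-inequality}), Propositions~\ref{prop:no_conflict} and~\ref{prop:ball_exist} together guarantee $\mathcal{U}(t)\neq\emptyset$ whenever $\|\mathbf{u}_{\alpha,i}(t)-(\mathbf{u}_{\alpha,\hat{i}}(t)+\mathbf{u}_{\alpha,\hat{j}}(t))/2\|_2\leq\gamma_{min}$ holds for every $i\in\{1,\ldots,q\}$. Substituting the closed form~(\ref{eq:controller-attack-free1}) for $\mathbf{u}_{\alpha,i}(t)$ — so that the common $-\frac{1}{2}R^{-1}B^{T}\mathbf{s}(t)$ term cancels in each difference $\mathbf{u}_{\alpha,i}(t)-\mathbf{u}_{\alpha,j}(t)=\frac{1}{2}K(t)(\hat{\mathbf{x}}_{\alpha,i}(t)-\hat{\mathbf{x}}_{\alpha,j}(t))$ — this condition becomes exactly~(\ref{eq:triangle-inequality}). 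Taking the contrapositive, $\mathcal{U}(t)=\emptyset$ forces~(\ref{eq:triangle-inequality-1}) for at least one index $i$, which is the index I fix for the rest of the argument.

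Next I would abbreviate $\mathbf{a}=\frac{1}{4}K(t)(\hat{\mathbf{x}}_{\alpha,i}(t)-\hat{\mathbf{x}}_{\alpha,\hat{i}}(t))$ and $\mathbf{b}=\frac{1}{4}K(t)(\hat{\mathbf{x}}_{\alpha,i}(t)-\hat{\mathbf{x}}_{\alpha,\hat{j}}(t))$, so that~(\ref{eq:triangle-inequality-1}) reads $\|\mathbf{a}+\mathbf{b}\|_2>\gamma_{min}$. The triangle inequality gives $\|\mathbf{a}\|_2+\|\mathbf{b}\|_2\geq\|\mathbf{a}+\mathbf{b}\|_2>\gamma_{min}$. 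Were both $\|\mathbf{a}\|_2\leq\frac{1}{2}\gamma_{min}$ and $\|\mathbf{b}\|_2\leq\frac{1}{2}\gamma_{min}$, we would obtain $\|\mathbf{a}\|_2+\|\mathbf{b}\|_2\leq\gamma_{min}$, a contradiction; hence at least one of $\|\mathbf{a}\|_2>\frac{1}{2}\gamma_{min}$ or $\|\mathbf{b}\|_2>\frac{1}{2}\gamma_{min}$ holds, which is the claimed dichotomy.

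The argument is elementary and I do not anticipate a genuine obstacle; the only points needing care are the bookkeeping of the quantifier on $i$ (the statement should be read as asserting that \emph{some} index $i$ violating~(\ref{eq:triangle-inequality}) exhibits the dichotomy) and the explicit cancellation of the $\mathbf{s}(t)$-dependent part of $\mathbf{u}_{\alpha,i}(t)$ when translating a separation of control inputs into a separation of Kalman-filter estimates.
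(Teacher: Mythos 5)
Your proposal is correct and follows essentially the same route as the paper: starting from inequality~(\ref{eq:triangle-inequality-1}) (which the paper likewise obtains as the contrapositive of Propositions~\ref{prop:no_conflict} and~\ref{prop:ball_exist}), applying the triangle inequality, and concluding that both terms cannot simultaneously be at most $\frac{1}{2}\gamma_{min}$. Your version is in fact slightly cleaner, since the paper's closing sentence states the implication in the wrong direction (``is satisfied if'' rather than ``only if''), whereas your explicit contradiction argument gets the logic right.
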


\begin{proof}
Applying triangle inequality to the left hand side of inequality~(\ref{eq:triangle-inequality-1}), we then have
\begin{align}
        &\|K(t)(\mathbf{\hat{x}}_{\alpha,i}(t) - \mathbf{\hat{x}}_{\alpha,\hat{i}}(t))/4\|_2+ \|K(t)(\mathbf{\hat{x}}_{\alpha,i}(t) - \mathbf{\hat{x}}_{\alpha,\hat{j}}(t))/4\|_2 \nonumber\\
        &\geq \left\|\frac{K(t)(\mathbf{\hat{x}}_{\alpha,i}(t) - \mathbf{\hat{x}}_{\alpha,\hat{i}}(t))}{4} + \frac{K(t)(\mathbf{\hat{x}}_{\alpha,i}(t) - \mathbf{\hat{x}}_{\alpha,\hat{j}}(t))}{4}\right\|_2 \nonumber\\
        &> \gamma_{min}.
        \label{eq:triangle-inequality-2}
\end{align}
Inequality~(\ref{eq:triangle-inequality-2}) is satisfied if at least one of $\|\frac{1}{4}K(t)(\mathbf{\hat{x}}_{\alpha,i}(t) - \mathbf{\hat{x}}_{\alpha,\hat{i}}(t))\|_2 > \frac{1}{2}\gamma_{min}$ and $\|\frac{1}{4}K(t)(\mathbf{\hat{x}}_{\alpha,i}(t) - \mathbf{\hat{x}}_{\alpha,\hat{j}}(t))\|_2 > \frac{1}{2}\gamma_{min}$ is satisfied.
\end{proof}

Motivated by Lemma~\ref{lemma:triange-inequality}, our approach to selecting $\mathcal{I}(t)$ such that $\bigcap_{i \in {\mathcal{I}}(t)} \mathcal{U}_{\gamma_i} \neq \emptyset$ is to compare between two state estimates. Next, we show this comparison.

Denote $C_{\alpha, i, j}$ as $C$ with rows indexed by $\{1,\ldots,p\} \setminus (\mathcal{A}_{i} \bigcup \mathcal{A}_{j})$. We assume that all systems $(A, C_{\alpha, i, j}) \ \forall i,j \in \{1,\ldots,q\} \ i \neq j$ are observable. Introduce the KF state estimate $\hat{\mathbf{x}}_{\alpha,i,j}(t)$, which is obtained via $\mathbf{y}_{\alpha,i,j}(t)$, the output with the measurements indexed by $\{1,\ldots,p\} \setminus (\mathcal{A}_i \bigcup \mathcal{A}_j)$. 

\begin{Lemma}
\label{lemma:triange-inequality2}
If $\|\frac{1}{4}K(t)(\mathbf{\hat{x}}_{\alpha,i}(t) - \mathbf{\hat{x}}_{\alpha,j}(t)\|_2 > \frac{1}{2}\gamma_{min},$ then at least one of $\|\frac{1}{4}K(t)(\mathbf{\hat{x}}_{\alpha,i}(t) - \mathbf{\hat{x}}_{\alpha,i,j}(t)\|_2 > \frac{1}{4}\gamma_{min}$ and $\|\frac{1}{4}K(t)(\mathbf{\hat{x}}_{\alpha,j}(t) - \mathbf{\hat{x}}_{\alpha,i,j}(t)\|_2 > \frac{1}{4}\gamma_{min}$ holds.
\end{Lemma}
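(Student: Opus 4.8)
The plan is to reproduce, one level down, the same two‑step decomposition used in Lemma~\ref{lemma:triange-inequality}: insert the joint estimate $\hat{\mathbf{x}}_{\alpha,i,j}(t)$ as an intermediate point between $\hat{\mathbf{x}}_{\alpha,i}(t)$ and $\hat{\mathbf{x}}_{\alpha,j}(t)$, push everything through the linear map $\tfrac14 K(t)$, and invoke subadditivity of the Euclidean norm together with a pigeonhole argument on the two resulting terms.

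Concretely, I would first write the algebraic identity
$$\hat{\mathbf{x}}_{\alpha,i}(t) - \hat{\mathbf{x}}_{\alpha,j}(t) = \big(\hat{\mathbf{x}}_{\alpha,i}(t) - \hat{\mathbf{x}}_{\alpha,i,j}(t)\big) + \big(\hat{\mathbf{x}}_{\alpha,i,j}(t) - \hat{\mathbf{x}}_{\alpha,j}(t)\big),$$
apply $\tfrac14 K(t)$ to both sides (linearity), and take $\|\cdot\|_2$. The triangle inequality then yields
$$\big\|\tfrac14 K(t)(\hat{\mathbf{x}}_{\alpha,i}(t)-\hat{\mathbf{x}}_{\alpha,j}(t))\big\|_2 \le \big\|\tfrac14 K(t)(\hat{\mathbf{x}}_{\alpha,i}(t)-\hat{\mathbf{x}}_{\alpha,i,j}(t))\big\|_2 + \big\|\tfrac14 K(t)(\hat{\mathbf{x}}_{\alpha,i,j}(t)-\hat{\mathbf{x}}_{\alpha,j}(t))\big\|_2,$$
where I also use the symmetry $\|\tfrac14 K(t)(\hat{\mathbf{x}}_{\alpha,i,j}-\hat{\mathbf{x}}_{\alpha,j})\|_2 = \|\tfrac14 K(t)(\hat{\mathbf{x}}_{\alpha,j}-\hat{\mathbf{x}}_{\alpha,i,j})\|_2$ to match the statement. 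By hypothesis the left‑hand side exceeds $\tfrac12\gamma_{min}$, so the sum of the two right‑hand terms exceeds $\tfrac12\gamma_{min}$.

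To finish, I would argue by contradiction: if both $\|\tfrac14 K(t)(\hat{\mathbf{x}}_{\alpha,i}(t)-\hat{\mathbf{x}}_{\alpha,i,j}(t))\|_2 \le \tfrac14\gamma_{min}$ and $\|\tfrac14 K(t)(\hat{\mathbf{x}}_{\alpha,j}(t)-\hat{\mathbf{x}}_{\alpha,i,j}(t))\|_2 \le \tfrac14\gamma_{min}$ held, then their sum would be at most $\tfrac12\gamma_{min}$, contradicting the displayed strict inequality; hence at least one of the two exceeds $\tfrac14\gamma_{min}$. I do not anticipate any real obstacle here — the only thing to be careful about is bookkeeping of the constants ($\tfrac12\gamma_{min}$ splitting into two halves of $\tfrac14\gamma_{min}$) and making sure $\hat{\mathbf{x}}_{\alpha,i,j}(t)$ is the correct intermediate estimate (observability of $(A,C_{\alpha,i,j})$, already assumed just above the lemma, guarantees it is well defined).
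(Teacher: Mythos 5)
Your proposal is correct and follows essentially the same route as the paper: insert $\hat{\mathbf{x}}_{\alpha,i,j}(t)$ as the intermediate point, apply the triangle inequality to $\frac{1}{4}K(t)(\hat{\mathbf{x}}_{\alpha,i}(t)-\hat{\mathbf{x}}_{\alpha,j}(t))$, and conclude by pigeonhole that one of the two terms must exceed $\frac{1}{4}\gamma_{min}$. Your explicit contradiction step is just a slightly more detailed rendering of the paper's closing sentence.
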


\begin{proof}
Applying triangle inequality, we have
    \begin{align}
        \gamma_{min}/2 < &\|K(t)(\mathbf{\hat{x}}_{\alpha,i}(t) - \mathbf{\hat{x}}_{\alpha,j}(t))/4\|_2 \nonumber \\
        \leq &\|K(t)(\mathbf{\hat{x}}_{\alpha,i}(t) - \mathbf{\hat{x}}_{\alpha,i,j}(t))/4\|_2\nonumber \\
        &+ \|K(t)(\mathbf{\hat{x}}_{\alpha,j}(t) - \mathbf{\hat{x}}_{\alpha,i,j}(t))/4\|_2 
         \label{eq:triangle-inequality-3}
    \end{align}
In order for inequality~(\ref{eq:triangle-inequality-3}) to hold, at least one of the following inequalities holds:
\begin{align}
    \|\frac{1}{4}K(t)(\mathbf{\hat{x}}_{\alpha,i}(t) - \mathbf{\hat{x}}_{\alpha,i,j}(t)\|_2 &> \frac{1}{4}\gamma_{min} \label{eq:metric1}\\
    \|\frac{1}{4}K(t)(\mathbf{\hat{x}}_{\alpha,j}(t) - \mathbf{\hat{x}}_{\alpha,i,j}(t)\|_2 &> \frac{1}{4}\gamma_{min}
    \label{eq:metric2}
\end{align}
\end{proof}

We use inequalities~\eqref{eq:metric1}-\eqref{eq:metric2} later to identify the $\mathcal{U}_{\gamma_i}$ that leads to infeasibility of QCQP~(\ref{eq:QCQP-multiple-ad2}).
Intuitively, for a certain pair of $\{i, j\} \in \{1,\ldots,q\},$ if the measurements are only affected by the noises, $\|\mathbf{\hat{x}}_{\alpha,i}(t) - \mathbf{\hat{x}}_{\alpha,j}(t)\|_2$ should be smaller than some thresholds. If $\mathcal{A}_i = \mathcal{A}^{\ast}$ or $\mathcal{A}_j = \mathcal{A}^{\ast},$ $\hat{\mathbf{x}}_{\alpha,i,j}(t)$ should not be biased by $\mathbf{a}(t).$ Thus, when $\|\frac{1}{4}K(t)(\mathbf{\hat{x}}_{\alpha,i}(t) - \mathbf{\hat{x}}_{\alpha,j}(t)\|_2 > \frac{1}{2}\gamma_{min}$, $\hat{\mathbf{x}}_{\alpha,i,j}(t)$ can be utilized as a benchmark for checking whether $\mathbf{\hat{x}}_{\alpha,i}(t)$ and $\mathbf{\hat{x}}_{\alpha,j}(t)$ are affected by the attack and diverge from the unaffected values.

Since both the noise and attack may result in the divergence between state estimates, it is necessary to determine the worst case probability that the noise results in $\|\frac{1}{4}K(t)(\mathbf{\hat{x}}_{\alpha,j}(t) - \mathbf{\hat{x}}_{\alpha,i,j}(t)\|_2 > \frac{1}{4}\gamma_{min}$, which could result in measurements being excluded erroneously. We derive the following theorem to show the probability that $\|\frac{1}{4}K(t)(\mathbf{\hat{x}}_{\alpha,j}(t) - \mathbf{\hat{x}}_{\alpha,i,j}(t)\|_2 > \frac{1}{4}\gamma_{min}$ $\forall i,j \in \{1,\ldots,q\}$ happens during $t \in [0, T]$ is upper-bounded when no adversary is present. We will utilize this theorem later to eliminate $\mathcal{U}_{\gamma_i}(t)$ which may render $\mathcal{U}(t) = \emptyset$.

\begin{Theorem}
\label{theorem:stability_guarantee_EFK_attack_free}
Suppose $\mathcal{A}^{\ast} = \mathcal{A}_i$. There exists $\eta^{i,j}$ such that for each $j \in \{1,\ldots,q\} \setminus \{i\}$ $$Pr(\sup_{t \in [0,T]} \| K(t)\left(\mathbf{\hat{x}}_{\alpha,i}(t) - \mathbf{\hat{x}}_{\alpha,i,j}(t)\right) \|_2 > \gamma_{min} ) \leq \eta^{i,j},$$ where $\eta^{i,j} = {4(\lambda_i^{\ast}\Gamma_i + \lambda_{i,j}^{\ast}\Gamma_{i,j})\overline{K}^{2}}/{\gamma_{min}^{2}},$ $\mathbf{\hat{x}}_{\alpha,i}(t)$ and $\mathbf{\hat{x}}_{\alpha,i,j}(t)$ are estimates calculated using KF and measurements of sensors indexed by $\{1,\ldots,p\} \setminus \mathcal{A}_i$ and $\{1,\ldots,p\} \setminus (\mathcal{A}_i \bigcup \mathcal{A}_j)$, respectively, $\overline{K} = \sup_{t \in [0,T]} ||K(t)||_2,$ $\lambda_i^{\ast} = \sup_{t \in [0,T]}{\{\lambda_{max}(\Sigma_i(t))\}},$ $\lambda_{i,j}^{\ast} = \sup_{t \in [0,T]}{\{\lambda_{max}(\Sigma_{i,j}(t))\}},$ $\Sigma_i(t)$ and $\Sigma_{i,j}(t)$ are the covariance matrices of $(\mathbf{x}(t)-\hat{\mathbf{x}}_{\alpha,i}(t))$ and $(\mathbf{x}(t)-\hat{\mathbf{x}}_{\alpha,i,j}(t))$, respectively, $\lambda_{max}(\cdot)$ denotes the maximum eigenvalue of a matrix, $\Gamma_i = \mathbf{E}\left(\mathbf{e}_i(0)^T{\Sigma_i(0)}^{-1}\mathbf{e}_i(0)\right)$, $\Gamma_{i,j} = \mathbf{E}\left(\mathbf{e}_{ij}(0)^T{\Sigma_{ij}(0)}^{-1}\mathbf{e}_{ij}(0)\right),$ $\mathbf{e}_i(0) = \hat{\mathbf{x}}_{\alpha,i}(0) - \mathbf{x}_0,$ and $\mathbf{e}_{ij}(0) = \hat{\mathbf{x}}_{\alpha,i,j}(0) - \mathbf{x}_0.$
\end{Theorem}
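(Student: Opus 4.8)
The plan is to reduce the claim to a concentration bound on two Kalman--Bucy estimation errors that are untouched by the attack, and then apply Doob's inequality (Lemma~\ref{lemma:doob_martingale}). The key first observation is that since $\mathcal{A}^{\ast} = \mathcal{A}_i$ and $\mathcal{A}_i \subseteq \mathcal{A}_i \cup \mathcal{A}_j$, neither $\hat{\mathbf{x}}_{\alpha,i}(t)$ nor $\hat{\mathbf{x}}_{\alpha,i,j}(t)$ processes any compromised measurement; each is a genuine Kalman filter driven only by $\mathbf{w}(t)$ and by noise-only observations. Hence $e_i(t) := \mathbf{x}(t)-\hat{\mathbf{x}}_{\alpha,i}(t)$ and $e_{i,j}(t) := \mathbf{x}(t)-\hat{\mathbf{x}}_{\alpha,i,j}(t)$ are zero-mean Gaussian processes, so that $\mathbf{E}\|e_i(t)\|_2^2 = \mathbf{tr}(\Sigma_i(t)) \le n\lambda_i^{\ast}$ and $\mathbf{E}\|e_{i,j}(t)\|_2^2 = \mathbf{tr}(\Sigma_{i,j}(t)) \le n\lambda_{i,j}^{\ast}$ for every $t \in [0,T]$, directly from the definitions of $\Sigma_i, \Sigma_{i,j}, \lambda_i^{\ast}, \lambda_{i,j}^{\ast}$. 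Note that the hypothesis $\mathcal{A}^{\ast}=\mathcal{A}_i$ is used precisely to guarantee unbiasedness here.

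Next I would write $\hat{\mathbf{x}}_{\alpha,i}(t) - \hat{\mathbf{x}}_{\alpha,i,j}(t) = e_{i,j}(t) - e_i(t)$ and apply the triangle inequality, $\|K(t)(\hat{\mathbf{x}}_{\alpha,i}(t)-\hat{\mathbf{x}}_{\alpha,i,j}(t))\|_2 \le \|K(t)e_i(t)\|_2 + \|K(t)e_{i,j}(t)\|_2$. Therefore the event $\{\sup_{t}\|K(t)(\hat{\mathbf{x}}_{\alpha,i}(t)-\hat{\mathbf{x}}_{\alpha,i,j}(t))\|_2 > \gamma_{min}\}$ is contained in $\{\sup_t \|K(t)e_i(t)\|_2 > \gamma_{min}/2\} \cup \{\sup_t \|K(t)e_{i,j}(t)\|_2 > \gamma_{min}/2\}$, and a union bound reduces the theorem to bounding each of these two probabilities by $4n\overline{K}^2\lambda_i^{\ast}/\gamma_{min}^2$ and $4n\overline{K}^2\lambda_{i,j}^{\ast}/\gamma_{min}^2$, respectively, whose sum is exactly $\eta^{i,j}$.

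For a single term, square inside: $\{\sup_t \|K(t)e_i(t)\|_2 > \gamma_{min}/2\} = \{\sup_t \|K(t)e_i(t)\|_2^2 > \gamma_{min}^2/4\}$. Invoking Doob's inequality (Lemma~\ref{lemma:doob_martingale}) on the nonnegative process $\|K(t)e_i(t)\|_2^2$ gives $Pr(\cdot) \le \frac{\mathbf{E}\|K(T)e_i(T)\|_2^2}{\gamma_{min}^2/4} \le \frac{4\overline{K}^2\,\mathbf{tr}(\Sigma_i(T))}{\gamma_{min}^2} \le \frac{4n\overline{K}^2\lambda_i^{\ast}}{\gamma_{min}^2}$, where I used $\|K(T)e_i(T)\|_2^2 \le \overline{K}^2\|e_i(T)\|_2^2$ together with the covariance bound from the first step; the argument for $e_{i,j}(t)$ is identical. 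Summing the two bounds yields $\eta^{i,j}$ and completes the proof.

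The main obstacle is the appeal to Doob's inequality in the last step: as stated, Lemma~\ref{lemma:doob_martingale} requires $\|K(t)e_i(t)\|_2^2$ to be a nonnegative supermartingale, which is not automatic, since the Kalman error covariance is generally non-monotone (it starts at $0$, and the Riccati flow together with the process noise initially inflates it), so $\|e_i(t)\|_2^2$ need not have nonpositive drift. Making this step rigorous is where care is required: one must either identify the correct supermartingale (e.g.\ an appropriately weighted quadratic form in $e_i$, or a shifted process absorbing the drift term) to which Doob applies, or substitute a weaker but still sufficient maximal bound. The remaining ingredients---unbiasedness of the two estimators, the triangle/union-bound bookkeeping, and matching the constants to $\eta^{i,j}$---are routine.
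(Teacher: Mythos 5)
Your decomposition is exactly the paper's: write the difference of estimates as a difference of estimation errors, apply the triangle inequality and a union bound to reduce to two one-sided maximal probabilities, and match constants so the two bounds sum to $\eta^{i,j}$. The bookkeeping and the role of the hypothesis $\mathcal{A}^{\ast}=\mathcal{A}_i$ (both filters see only clean measurements) are all correct. However, the one step you could not close is precisely the step that carries the proof, so as written the argument has a genuine gap: Doob's inequality cannot be applied to $\|K(t)e_i(t)\|_2^2$, since (as you note) this process is not a supermartingale --- the error covariance starts at $\Phi_{\alpha,i}(0)=0$ and grows, so the drift is initially positive and the maximal bound in terms of $\mathbf{E}\|K(T)e_i(T)\|_2^2$ simply does not follow.

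The paper closes this gap in exactly the way you conjecture at the end. It first converts the event using the eigenvalue bound $\|e_i(t)\|_2^2 \leq \lambda_{max}(\Sigma_i(t))\, e_i(t)^{T}\Sigma_i(t)^{-1}e_i(t) \leq \lambda_i^{\ast}\, e_i(t)^{T}\Sigma_i(t)^{-1}e_i(t)$, so that $\{\sup_t \|e_i(t)\|_2^2 > \gamma_{min}^2/(4\overline{K}^2)\} \subseteq \{\sup_t e_i(t)^{T}\Sigma_i(t)^{-1}e_i(t) > \gamma_{min}^2/(4\overline{K}^2\lambda_i^{\ast})\}$. The covariance-weighted quadratic form $V(e_i(t),t)=e_i(t)^{T}\Sigma_i(t)^{-1}e_i(t)$ is the ``appropriately weighted quadratic form'' you were looking for: for observable systems its differential generator is nonpositive (the paper cites Reif et al.\ \cite{reif2000stochastic}), so it is a nonnegative supermartingale to which Lemma~\ref{lemma:doob_martingale} applies, and its expectation is $n$ (it is the squared Mahalanobis norm of a Gaussian error). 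This yields $Pr(\sup_t e_i^{T}\Sigma_i^{-1}e_i > \gamma_{min}^2/(4\overline{K}^2\lambda_i^{\ast})) \leq 4n\overline{K}^2\lambda_i^{\ast}/\gamma_{min}^2$, and likewise for $e_{ij}$; summing gives $\eta^{i,j}$. So your proposal is structurally right and your diagnosis of the missing ingredient is accurate, but a complete proof requires supplying the supermartingale $e_i^{T}\Sigma_i^{-1}e_i$ (with the observability hypothesis and the reference justifying its nonpositive drift) rather than leaving that step as an acknowledged obstacle.
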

\begin{proof}
Please see the appendix for the detailed proof.
\end{proof}

Define $\{||K(t)\left(\mathbf{\hat{x}}_{\alpha,i}(t) - \mathbf{\hat{x}}_{\alpha,i,j}(t)\right) ||_2 > \gamma_{min}\}$ as $\Omega^{ij}_{i},$ $\{||K(t)\left(\mathbf{\hat{x}}_{\alpha,i}(t) - \mathbf{\hat{x}}_{\alpha,i,j}(t)\right) ||_2 \leq \gamma_{min}\}$ as $\overline{\Omega}^{ij}_{i}.$ Theorem~\ref{theorem:stability_guarantee_EFK_attack_free} implies that the probability that $\Omega^{ij}_{i}$ occurs during $t \in [0, T]$ is bounded above by $\eta^{i,j}$ $\forall i,j \in \{1,\ldots,p\}$ when $\mathcal{A}^{\ast} = \mathcal{A}_i$. In other words, we can eliminate $\hat{\mathbf{x}}_i(t)$ if $\Omega_i^{ij}$ occurs, and the probability that we improperly eliminate an uncompromised estimate $\hat{\mathbf{x}}_i(t)$ ($\mathcal{A}^{\ast} = \mathcal{A}_i$ but we eliminate $\mathcal{U}_{\gamma_i}$) is bounded above by $\eta^{i,j}.$ Applying the results of Propositions~\ref{prop:no_conflict} and \ref{prop:ball_exist} and Theorem~\ref{theorem:stability_guarantee_EFK_attack_free}, we propose the function $\mathcal{I}$\_Selection in Algorithm~\ref{algo:control-policy} to select constraints that can provide feasible control inputs $\mathbf{u}(t)$ to guarantee safety and reachability requirements.

\begin{algorithm}[h]
	\caption{Algorithm for selecting constraints $\mathcal{U}_{\gamma_i}(t)$ that can guarantee safety and reachability properties with desired probability.}
	\label{algo:control-policy}
	\begin{algorithmic}[1]
 		\Procedure{$\mathcal{I}$\_Selection($q$, $K(t)$, $\hat{\mathbf{x}}_{\alpha,i}(t)$, $\hat{\mathbf{x}}_{\alpha,i,j}(t)$, $\gamma_{min}$, $\mathcal{U}_{\gamma_i}(t)$, $i,j \in \{1,\ldots,q\}, i \neq j$)}{}
 		\State \textbf{Input: }number of attack patterns $q$, LQG controller gain $K(t)$, state estimates excluding each attack pattern $\hat{\mathbf{x}}_{\alpha,i}(t)$, state estimates excluding each pair of attack patterns $\hat{\mathbf{x}}_{\alpha,i,j}(t)$, minimum of radii for all constraints $\gamma_{min}$, feasible control input set corresponding to each attack pattern $\mathcal{U}_{\gamma_i}$
 		\State \textbf{Output: }set of indexes of selected constraints $\mathcal{I}(t)$
 		\State $\mathcal{I}(t) \leftarrow \{1,\ldots,q\}$
 		\State $d_{i,j} \leftarrow \|\hat{\mathbf{x}}_{\alpha,i}(t) - \hat{\mathbf{x}}_{\alpha,j}(t)\|_2, \ i,j\in\mathcal{I}(t), \ i\neq j$
 		\State $\{\hat{i}, \hat{j}\} \leftarrow \argmax_{i,j \in \mathcal{I}(t)} \{d_{i,j}\}$
        \While{$d_{\hat{i},\hat{j}} > 2\gamma_{min}$}
            \If{$\Omega^{\hat{i}\hat{j}}_{\hat{i}}$}
                \State $\mathcal{I}(t) \leftarrow \{1,\ldots,q\} \setminus \{\hat{i}\}$
            \EndIf
            
            \If{$\Omega^{\hat{i}\hat{j}}_{\hat{j}}$}
                \State $\mathcal{I}(t) \leftarrow \{1,\ldots,q\} \setminus \{\hat{j}\}$
            \EndIf
            
            \State $\{\hat{i}, \hat{j}\} \leftarrow \argmax_{i,j \in \mathcal{I}(t)} \{d_{i,j}\}$
        \EndWhile
            
 		\ForEach{$i \in \mathcal{I}(t)$}
 		    \If{$||\frac{1}{4}K(t)(\mathbf{\hat{x}}_{\alpha,i}(t) - \mathbf{\hat{x}}_{\alpha,\hat{i}}(t))||_2 > \frac{1}{2}\gamma_{min}$}
 
 		        \If{$\Omega_i^{i\hat{i}}$}
 		            \State $\mathcal{I}(t) \leftarrow \mathcal{I}(t) \setminus \{i\}$
 		        \EndIf
 		            
 		        \If{$\Omega_{\hat{i}}^{i\hat{i}}$}
 		            \State $\mathcal{I}(t) \leftarrow \mathcal{I}(t) \setminus \{\hat{i}\}$
 		        \EndIf
 		    \EndIf
 		    
 		    \If{$||\frac{1}{4}K(t)(\mathbf{\hat{x}}_{\alpha,i}(t) - \mathbf{\hat{x}}_{\alpha,\hat{j}}(t))||_2 > \frac{1}{2}\gamma_{min}$}
 		  
 		        \If{$\Omega_i^{i\hat{j}}$}
 		            \State $\mathcal{I}(t) \leftarrow \mathcal{I}(t) \setminus \{i\}$
 		        \EndIf
 		            
 		        \If{$\Omega_{\hat{j}}^{i\hat{j}}$}
 		            \State $\mathcal{I}(t) \leftarrow \mathcal{I}(t) \setminus \{\hat{j}\}$
 		        \EndIf
 		    \EndIf
 		    
 		    \State $\{\hat{i}, \hat{j}\} \leftarrow \argmax_{i,j \in \mathcal{I}(t)} \{d_{i,j}\}$
 		\EndFor

        \State \Return{$\mathcal{I}(t)$}
        \EndProcedure
	 \end{algorithmic}
\end{algorithm}

Algorithm~\ref{algo:control-policy} works as follows. It requires the number of attack patterns $q$, the LQG controller gain $K(t)$, the state estimates excluding each attack pattern $\hat{\mathbf{x}}_{\alpha,i}(t)$, the state estimates excluding each pair of attack patterns $\hat{\mathbf{x}}_{\alpha,i,j}(t)$, the minimum of radii for all constraints $\gamma_{min}$, the feasible control input set corresponding to each attack pattern $\mathcal{U}_{\gamma_i}$ as the inputs, and returns the set of indexes of selected constraints $\mathcal{I}(t)$ as the output. The algorithm selects constraints $\mathcal{U}_{\gamma_i}(t)$ that can provide feasible control inputs $\mathbf{u}(t)$ to guarantee safety and reachability properties with desired probability. The existence of the feasible control inputs $\mathbf{u}(t)$ is guaranteed by satisfying the sufficient conditions in Proposition~\ref{prop:ball_exist}. Specifically, the condition $d_{\hat{i}, \hat{j}} \leq 2\gamma_{min}$ is guaranteed by line 5 - line 10. The condition $\|\mathbf{u}_{\alpha,k} - (\mathbf{u}_{\alpha,\hat{i}} + \mathbf{u}_{\alpha,\hat{j}})/2\|_2 \leq \gamma_{min} \ \forall k \in \{1,\ldots,q\} \setminus \{\hat{i}, \hat{j}\}$ is verified via line 11 - 22. The judgment statements in line 12 and line 17 select the pairs of state estimates that may be affected by the adversary based on Lemma~\ref{lemma:triange-inequality}. The constraints that are likely to be affected are eliminated in line 13 - 16 and line 18 - 21 based on Lemma~\ref{lemma:triange-inequality2}.

\subsection{Control Strategy Design}
Our proposed control design is summarized in Algorithm~\ref{algo:overall-control-policy}. In line 2 and 3, we initialize $\mathcal{I}(t)$ and $\mathcal{U}(t)$ as the indexes and intersection of all constraints. In line 4, we first check whether $\mathcal{U}(t) = \emptyset$. In line 5, if $\mathcal{U}(t) = \emptyset$, we utilize $\mathcal{I}$\_Selection in Algorithm~\ref{algo:control-policy} to identify and eliminate those $\mathcal{U}_{\gamma_i}(t)$ which result in $\mathcal{U}(t) = \emptyset$ and output $\mathcal{I}(t)$. 
In line 6, the algorithm invokes the existing solver, denoted as $\text{QCQP}(\mathcal{I}(t)),$ at each time $t$ to solve the QCQP with the form 
\begin{equation}
\label{eq:QCQP-multiple-ad2}
\begin{split}
\min_{\mathbf{u}(t)} & \ \mathbf{u}(t)^{T}R\mathbf{u}(t) + \hat{\mathbf{x}}(t)^{T}P(t)B\mathbf{u}(t) + \mathbf{s}(t)^{T}B\mathbf{u}(t) \\
\mbox{s.t.} & \ \mathbf{u}(t) \in \mathcal{U}_{\gamma_i}(t), \ i \in \mathcal{I}(t)
\end{split}
\end{equation}

\begin{algorithm}[h]
	\caption{Proposed control policy that guarantees safety and reachability constraints under multiple-adversary scenario.}
	\label{algo:overall-control-policy}
	\begin{algorithmic}[1]
	\Procedure{Control\_Policy($q$, $\mathcal{U}_{\gamma_i}(t)$, $i\in\{1,\ldots,q\}$)}{}
	    \State \textbf{Input: }number of possible compromised sensor sets $q$, feasible control input set corresponding to each possible compromised sensor set $\mathcal{U}_{\gamma_i}$
	    \State \textbf{Output: }control input at each time step $\mathbf{u}(t)$
	    \State $\mathcal{I}(t) \leftarrow \{1,\ldots,q\}$
	    \State $\mathcal{U}(t) \leftarrow \bigcap_{i \in \mathcal{I}(t)} \mathcal{U}_{\gamma_i}(t)$
	    \If{$\mathcal{U}(t) == \emptyset$}
	        \State $\mathcal{I}(t) \leftarrow \mathcal{I}$\_Selection($q$, $K(t)$, $\hat{\mathbf{x}}_{\alpha,i}(t)$, $\hat{\mathbf{x}}_{\alpha,i,j}(t)$, $\gamma_{min}$, $\mathcal{U}_{\gamma_i}(t)$, $i,j \in \{1,\ldots,q\}, i \neq j$)
	    \EndIf
	    
	    \State $\mathbf{u}(t) \leftarrow \text{QCQP}(\mathcal{I}(t))$ in Equation~(\ref{eq:QCQP-multiple-ad2})
	    \State \Return{$\mathbf{u}(t)$}
	\EndProcedure
	\end{algorithmic}
\end{algorithm}

When there is no attack, the controller attempts to minimize the objective function. Due to the existence of noise, $d_{i,j}$ may deviate from $0$ for $\forall i,j \in \{1,\ldots,q\}.$ This may lead to smaller feasible region $\mathcal{U}(t)$, and suboptimal performance with respect to expected cost. If all $\mathbf{u}_{\alpha,i}(t)$ can be proved to be close to the optimal control $\mathbf{u}^{\ast}(t)$, the feasibility and performance of the proposed approach can be guaranteed.


\begin{Lemma}
\label{lemma:feasibility-multi}
Let $\mathbf{u}^{\ast}(t) = \frac{1}{2}K(t)\hat{\mathbf{x}}(t) - \frac{1}{2}R^{-1}B^{T}\mathbf{s}(t).$ Define $\lambda^{\ast} = \sup_{t}{\{\lambda_{max}(\Sigma(t))\}},$ where $\Sigma(t)$ is the covariance matrix of $(\mathbf{x}(t)-\hat{\mathbf{x}}(t))$. Let $\overline{\eta} = \max\{\eta^i: \forall i \in \mathcal{I}(t)\},$ where $\eta^i = {(\lambda^{\ast}\Gamma + \lambda_{i}^{\ast}\Gamma_i)\overline{K}^{2}}/{\gamma_{min}^{2}}$ and $\Gamma = \mathbf{E}\left(\mathbf{e}(0)^T{\Sigma(0)}^{-1}\mathbf{e}(0)\right).$ When $\beta=1$, we have

\begin{align*}
    Pr(\sup_{t \in [0,T]}{||\mathbf{u}^{\ast}(t)-\mathbf{u}_{\alpha,i}(t)||_{2}} &\leq \gamma_{min}) \geq 1 - \overline{\eta}, \forall i \in \mathcal{I}(t), \\
    Pr\left(\mathbf{u}^{\ast}(t) \in \mathcal{U}(t) \ \forall t \in [0,T] \right) 
    &\geq 1 - \Sigma_{i \in \mathcal{I}(t) }  \ \eta^i
\end{align*}
\end{Lemma}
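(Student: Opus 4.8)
The plan is to reduce both inequalities to a single concentration bound on the difference of two Kalman filter estimates, obtained exactly as in Theorem~\ref{theorem:stability_guarantee_EFK_attack_free}. First I would subtract Equation~(\ref{eq:controller-attack-free1}) for $\mathbf{u}_{\alpha,i}(t)$ from the definition of $\mathbf{u}^{\ast}(t)$; the $-\frac{1}{2}R^{-1}B^{T}\mathbf{s}(t)$ terms cancel and leave $\mathbf{u}^{\ast}(t) - \mathbf{u}_{\alpha,i}(t) = \frac{1}{2}K(t)(\hat{\mathbf{x}}(t) - \hat{\mathbf{x}}_{\alpha,i}(t))$. Consequently the event $\{\sup_{t\in[0,T]}\|\mathbf{u}^{\ast}(t)-\mathbf{u}_{\alpha,i}(t)\|_{2}\leq\gamma_{min}\}$ coincides with $\{\sup_{t\in[0,T]}\|K(t)(\hat{\mathbf{x}}(t)-\hat{\mathbf{x}}_{\alpha,i}(t))\|_{2}\leq 2\gamma_{min}\}$, i.e.\ it is a level-crossing event for precisely the kind of quantity treated in Theorem~\ref{theorem:stability_guarantee_EFK_attack_free}, but with threshold $2\gamma_{min}$ in place of $\gamma_{min}$.

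Next I would apply (the argument of) Theorem~\ref{theorem:stability_guarantee_EFK_attack_free} with the full-sensor estimate $\hat{\mathbf{x}}(t)$ in the role of $\hat{\mathbf{x}}_{\alpha,i}(t)$ and $\hat{\mathbf{x}}_{\alpha,i}(t)$ in the role of $\hat{\mathbf{x}}_{\alpha,i,j}(t)$. This substitution is legitimate: $\{1,\ldots,p\}$ is a superset of $\{1,\ldots,p\}\setminus\mathcal{A}_i$, which is exactly the sensor-set nesting used in that theorem; and when $\beta=1$ there is no attack, so every Kalman filter, the full one included, is unbiased, and all estimation-error covariances are bounded (the relevant boundedness following from observability of $(A,C_{\alpha,i})$, which also forces $(A,C)$ observable, so $\lambda^{\ast},\lambda_i^{\ast}<\infty$). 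Re-running that argument with the threshold set to $2\gamma_{min}$ replaces $\eta^{i,j}=4(\lambda_i^{\ast}+\lambda_{i,j}^{\ast})n\overline{K}^{2}/\gamma_{min}^{2}$ by $4(\lambda^{\ast}+\lambda_i^{\ast})n\overline{K}^{2}/(2\gamma_{min})^{2}=(\lambda^{\ast}+\lambda_i^{\ast})n\overline{K}^{2}/\gamma_{min}^{2}$, yielding
\[
Pr\Big(\sup_{t\in[0,T]}\|\mathbf{u}^{\ast}(t)-\mathbf{u}_{\alpha,i}(t)\|_{2}>\gamma_{min}\Big)\leq\frac{(\lambda^{\ast}+\lambda_i^{\ast})n\overline{K}^{2}}{\gamma_{min}^{2}}\leq\overline{\eta}
\]
for each $i\in\mathcal{I}(t)$, which gives the first inequality.

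For the second inequality I would use $\gamma_{min}=\min_k\gamma_k\leq\gamma_i$: if $\|\mathbf{u}^{\ast}(t)-\mathbf{u}_{\alpha,i}(t)\|_{2}\leq\gamma_{min}$ for all $t$, then in particular $\mathbf{u}^{\ast}(t)\in\mathcal{U}_{\gamma_i}(t)$ for all $t$. Hence the complement of the event $\{\mathbf{u}^{\ast}(t)\in\mathcal{U}(t)\ \forall t\}$ (recall $\mathcal{U}(t)=\bigcap_{i\in\mathcal{I}(t)}\mathcal{U}_{\gamma_i}(t)$) is contained in $\bigcup_{i\in\mathcal{I}(t)}\{\sup_t\|\mathbf{u}^{\ast}(t)-\mathbf{u}_{\alpha,i}(t)\|_{2}>\gamma_{min}\}$, and Boole's inequality together with the per-$i$ bound just established gives $Pr(\mathbf{u}^{\ast}(t)\in\mathcal{U}(t)\ \forall t)\geq 1-\sum_{i\in\mathcal{I}(t)}(\lambda^{\ast}+\lambda_i^{\ast})n\overline{K}^{2}/\gamma_{min}^{2}$ (the sum read over a fixed realization of $\mathcal{I}(t)$, or bounded by the sum over all $q$ patterns since $\mathcal{I}(t)\subseteq\{1,\ldots,q\}$).

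I expect the only delicate point to be justifying the transfer in the second paragraph: one must check that the Dynkin/supermartingale construction behind Theorem~\ref{theorem:stability_guarantee_EFK_attack_free} relied only on (i) unbiasedness of both filters in the attack-free regime and (ii) the uniform bounds on the estimation-error covariances, and not on any special structure of the particular sensor subsets $\mathcal{A}_i$ and $\mathcal{A}_i\cup\mathcal{A}_j$. Once that is granted, the remainder is just the cancellation identity, a rescaling of the crossing threshold, and a union bound.
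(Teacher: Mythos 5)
Your proposal is correct and follows essentially the same route as the paper's own proof: cancel the $-\frac{1}{2}R^{-1}B^{T}\mathbf{s}(t)$ terms to reduce the event to $\sup_{t}\|K(t)(\hat{\mathbf{x}}(t)-\hat{\mathbf{x}}_{\alpha,i}(t))\|_{2}\geq 2\gamma_{min}$, invoke Theorem~\ref{theorem:stability_guarantee_EFK_attack_free} (rescaled to the threshold $2\gamma_{min}$ and with the nested pair $\{1,\ldots,p\}\supseteq\{1,\ldots,p\}\setminus\mathcal{A}_i$) to get the per-$i$ bound, and finish with a union bound. Your explicit flagging of the transfer step is a point the paper glosses over, but the substance of the argument is the same.
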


\begin{proof}
Based on the definitions of $\mathbf{u}^{\ast}(t)$ and $\mathbf{u}_{\alpha,i}(t)$, we have
\begin{equation*}
    \begin{array}{ll}
        &Pr(\sup_{t \in [0,T]}{||\mathbf{u}^{\ast}(t)-\mathbf{u}_{\alpha,i}(t)||_{2}} \geq \gamma_{min}) \\
        = &Pr(\sup_{t \in [0,T]}{||K(t)(\hat{\mathbf{x}}(t)-\hat{\mathbf{x}}_{\alpha,i}(t))||_{2}} \geq 2\gamma_{min})
    \end{array}
\end{equation*}
According to Theorem~\ref{theorem:stability_guarantee_EFK_attack_free}, we have $$Pr(\sup_{t \in [0,T]}{||K(t)(\hat{\mathbf{x}}(t)-\hat{\mathbf{x}}_{\alpha,i}(t))||_{2}} \geq 2\gamma_{min}) \leq \eta^i.$$ Thus,
\begin{equation}
\label{eq:distance-guarantee}
    Pr(\sup_{t \in [0,T]}{||\mathbf{u}^{\ast}(t)-\mathbf{u}_{\alpha,i}(t)||_{2}} \leq \gamma_{min}) \geq 1 - \eta^i
\end{equation}
We have 
\begin{equation}
\label{eq:feasibility-guarantee}
    Pr(\sup_{t \in [0,T]}{||\mathbf{u}^{\ast}(t)-\mathbf{u}_{\alpha,i}(t)||_{2}} \leq \gamma_{min}) \geq 1 - \overline{\eta}, \forall i \in \mathcal{I}(t).
\end{equation}
Based on Equation~\eqref{eq:distance-guarantee} we can obtain
\begin{align}
    &\quad Pr\left(\mathbf{u}^{\ast}(t) \in \mathcal{U}(t) \ \forall t \in [0,T] \right) \nonumber\\
    &= 1 - Pr(\cup_{i \in \mathcal{I}(t)} \mathbf{u}^{\ast}(t) \notin \mathcal{U}_{\gamma_i}(t) \ \forall t \in [0,T]) ) \nonumber\\
    &\geq 1 - \Sigma_{i \in \mathcal{I}(t) }  \ \eta^i \label{eq:optimality-guarantee}
\end{align}
\end{proof}

Based on~\eqref{eq:feasibility-guarantee}, the probability that in the non-adversary case 
$\bigcap_{i \in \mathcal{I}(t)}{\mathcal{U}_{\gamma_{i}}(t)} \neq \emptyset$ is lower bounded by $1 - \overline{\eta}$. Our proposed approach guarantees feasibility under benign environment. Equation~\eqref{eq:optimality-guarantee} implies that the probability that our proposed approach provides the same utility as the best possible control when no adversary is present is lower bounded.

\subsection{Safety and Reachability Guarantees}
In this subsection, we present the safety and reachability guarantees provided by the control policy obtained by Algorithm~\ref{algo:overall-control-policy}. Define $\Omega_{sr} \triangleq (\bigcap_{t \in [0,T]} \{\mathbf{x}(t) \notin U\}) \bigcap \{\mathbf{x}(T) \in G \}$, $\Omega_{\mathcal{U}} \triangleq \bigcap_{t \in [0,T]} \{\mathbf{u}(t) \in \mathcal{U}_{\gamma}^
{\ast}(t)\}$, and $\overline{\Omega}_{\mathcal{U}} \triangleq \bigcup_{t \in [0,T]} \{\mathbf{u}(t) \notin \mathcal{U}_{\gamma}^
{\ast}(t)\}$. The safety and reachability analysis of our proposed control policy is based on bounding the probability $P_0 = Pr\left(\Omega_{sr}\right)$. We define $P_1 \triangleq Pr(\Omega_{sr} | \Omega_{\mathcal{U}})$ as the probability that safety and reachability constraints are satisfied given that the control inputs are from $\mathcal{U}_{\gamma}^{\ast}(t)$. This probability has been discussed in Proposition~\ref{prop:barrier-safety} and \ref{prop:barrier-reachability}. We denote $P_2 \triangleq Pr(\Omega_{\mathcal{U}})$ as the probability that at any time $t$ the control input $\mathbf{u}(t)$ satisfies the correct constraint $\mathcal{U}_{\gamma}^{\ast}(t).$ We have
\begin{equation*}
\label{eqn:safety_problem}
\begin{array}{ll}
P_0 &= Pr\left(\Omega_{sr}\right) \\
&= Pr(\Omega_{sr} | \Omega_{\mathcal{U}})\cdot Pr(\Omega_{\mathcal{U}}) + Pr(\Omega_{sr} | \overline{\Omega}_{\mathcal{U}})\cdot Pr(\overline{\Omega}_{\mathcal{U}}) \\
&\geq P_1 \cdot P_2 \\
\end{array}
\end{equation*}
Here $P_0$ denotes the probability that safety and reachability are guaranteed and the control input $\mathbf{u}(t)$ is in $\mathcal{U}_{\gamma}^{\ast}(t), \forall t,$ and can be expressed using $P_1$ and $P_2$. If there exist lower bounds for both $P_1$ and $P_2$, the lower bound for $P_0$ exists. 

The safety and reachability guarantees of the proposed control policy is presented by the following theorem.

\begin{Theorem}
\label{theorem:control-policy-safety-reachability}
The control strategy returned by Algorithm~\ref{algo:overall-control-policy} satisfies the safety and reachability constraints in Equation~(\ref{eq:problem-form}). 
\end{Theorem}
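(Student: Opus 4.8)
The plan is to assemble the pieces already established in Section~\ref{Multiple-adversarial-scenario} and show that the two constraints of Equation~(\ref{eq:problem-form}) hold under the policy of Algorithm~\ref{algo:overall-control-policy}. First I would fix an arbitrary adversary policy $\nu$ that compromises some $\mathcal{A}^{\ast} = \mathcal{A}_{i^{\ast}}$, and argue that it suffices to lower-bound $P_0 = Pr(\Omega_{sr})$ by $1 - \epsilon_s$ for safety (and, splitting $\Omega_{sr}$ appropriately, by $1 - \epsilon_r$ for reachability), using the decomposition $P_0 \geq P_1 \cdot P_2$ already displayed before the theorem. So the work reduces to producing the two lower bounds: $P_1 \geq$ (the safety/reachability probability for trajectories whose inputs stay in $\mathcal{U}_\gamma^{\ast}(t)$), which is exactly what Proposition~\ref{prop:barrier-safety}, Proposition~\ref{prop:barrier-reachability}, and Theorem~\ref{theorem:barrier-certificate} give once $\gamma^{\ast} = \gamma_{i^{\ast}}$ is chosen by Algorithm~\ref{algo:barrier-certificate}; and $P_2 = Pr(\Omega_{\mathcal{U}})$, the probability that the control input produced by the QCQP~(\ref{eq:QCQP-multiple-ad2}) actually lies in the \emph{correct} ball $\mathcal{U}_{\gamma_{i^{\ast}}}(t)$ for every $t$.

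The crux is the bound on $P_2$, and here I would reason by cases on whether $i^{\ast} \in \mathcal{I}(t)$ at each time. When $\mathcal{U}(t) \neq \emptyset$, Algorithm~\ref{algo:overall-control-policy} keeps all indices, so $i^{\ast} \in \mathcal{I}(t)$ and the QCQP constraint $\mathbf{u}(t) \in \mathcal{U}_{\gamma_{i^{\ast}}}(t)$ is enforced exactly; the only danger is the branch where $\mathcal{U}(t) = \emptyset$ and $\mathcal{I}$\_Selection drops index $i^{\ast}$. By construction (lines 7--10 and 12--21 of Algorithm~\ref{algo:control-policy}), $i^{\ast}$ is dropped only if some event $\Omega^{i^{\ast}j}_{i^{\ast}}$ occurs, i.e.\ some estimate comparison $\|K(t)(\hat{\mathbf{x}}_{\alpha,i^{\ast}}(t) - \hat{\mathbf{x}}_{\alpha,i^{\ast},j}(t))\|_2 > \gamma_{min}$ flags the uncompromised estimate. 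Theorem~\ref{theorem:stability_guarantee_EFK_attack_free} bounds the probability of each such event by $\eta^{i^{\ast},j}$ over $t \in [0,T]$ (this is the key point: since $\mathcal{A}^{\ast} = \mathcal{A}_{i^{\ast}}$, both $\hat{\mathbf{x}}_{\alpha,i^{\ast}}$ and $\hat{\mathbf{x}}_{\alpha,i^{\ast},j}$ exclude the attacked sensors and hence are attack-free, so the divergence is driven by noise alone and Theorem~\ref{theorem:stability_guarantee_EFK_attack_free} applies). A union bound over $j \in \{1,\ldots,q\}\setminus\{i^{\ast}\}$ then yields $Pr(\overline{\Omega}_{\mathcal{U}}) \leq \sum_{j \neq i^{\ast}} \eta^{i^{\ast},j}$, hence $P_2 \geq 1 - \sum_{j\neq i^{\ast}}\eta^{i^{\ast},j}$.

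Putting it together, $P_0 \geq P_1 \cdot P_2 \geq (1-\epsilon_s^{0})(1 - \sum_{j\neq i^{\ast}}\eta^{i^{\ast},j})$ where $\epsilon_s^{0}$ is the barrier-certificate safety bound; choosing the target bounds in Algorithm~\ref{algo:barrier-certificate} (the $\epsilon_s$, $\epsilon_r$ fed to SOS\_Feasible) to absorb the $\eta^{i^{\ast},j}$ slack makes the product exceed $1-\epsilon_s$, and since $\nu$ and hence $i^{\ast}$ were arbitrary, the worst-case safety constraint $\max_\nu Pr(\cup_t\{\mathbf{x}(t)\in U\}) \leq \epsilon_s$ follows. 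The identical argument with Proposition~\ref{prop:barrier-reachability} in place of Proposition~\ref{prop:barrier-safety} and $\gamma_i^r$ in place of $\gamma_i^s$ gives $\min_\nu Pr(\mathbf{x}(T)\in G) \geq 1-\epsilon_r$. I expect the main obstacle to be making the bookkeeping between $P_1$, $P_2$, and the user-specified $\epsilon_s,\epsilon_r$ fully rigorous --- in particular verifying that the events dropping $i^{\ast}$ in Algorithm~\ref{algo:control-policy} are \emph{only} the $\Omega^{i^{\ast}j}_{i^{\ast}}$ events covered by Theorem~\ref{theorem:stability_guarantee_EFK_attack_free} and not some side effect of dropping $\hat i$ or $\hat j$ when $i^{\ast}\notin\{\hat i,\hat j\}$, and that the union bound over the (at most $q-1$) comparisons involving $i^{\ast}$ is legitimate even though $\mathcal{I}(t)$ changes adaptively within the loop.
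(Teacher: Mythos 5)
Your proposal follows essentially the same route as the paper's proof: the decomposition $P_0 \geq P_1 \cdot P_2$, bounding $P_1$ via Propositions~\ref{prop:barrier-safety} and~\ref{prop:barrier-reachability}, bounding $P_2$ by a union bound over the elimination events $\Omega^{i^{\ast}j}_{i^{\ast}}$ using Theorem~\ref{theorem:stability_guarantee_EFK_attack_free}, and then choosing $\gamma_i$ and $\gamma_{min}$ so the product exceeds $\max\{1-\epsilon_s, 1-\epsilon_r\}$. Your plan is in fact somewhat more explicit than the paper's proof about the case analysis on whether $\mathcal{U}(t)=\emptyset$ and about the bookkeeping issues in the adaptive elimination loop, which the paper leaves implicit.
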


\begin{proof}
The probability that safety and reachability are guaranteed when the control input $\mathbf{u}(t)$ is in $\mathcal{U}_{\gamma}^{\ast}(t)$ for all $t,$ with probability $P_0 = P_1 \cdot P_2.$ $P_1$ is bounded below by Proposition~\ref{prop:barrier-safety} and \ref{prop:barrier-reachability}. Assume $\mathcal{A}^{\ast} = \mathcal{A}_i.$ $P_2$ is equivalent to the probability that $\mathcal{U}_{\gamma_i}(t)$ is never eliminated at each time $t,$ which is bounded below by the probability that $\{\sup_{t \in [0,T]} ||K(t)\left(\mathbf{\hat{x}}_{\alpha,i}(t) - \mathbf{\hat{x}}_{\alpha,i,j}(t)\right) ||_2 \leq \gamma_{min}$ $ \forall j \in \{1,\ldots,q\} \setminus \{i\}\}.$ Thus, we obtain
\begin{align}
    P_2 &= Pr(\cap_{j \in \{1,\ldots,q\} \setminus \{i\}} \sup_{t \in [0, T]} \overline{\Omega}_i^{ij} ) \nonumber\\
    &= 1 - Pr(\cup_{j \in \{1,\ldots,q\} \setminus \{i\}} \sup_{t \in [0, T]} \Omega_i^{ij} ) \nonumber\\
    &\geq 1 - \Sigma_{j \in \{1,\ldots,q\} \setminus \{i\}} Pr(\sup_{t \in [0, T]} \Omega_i^{ij} ) \nonumber\\
    &\geq 1 - \Sigma_{j \in \{1,\ldots,q\} \setminus \{i\}} \eta^{i,j}, \label{eq:P_2-lower-bound}\\
    P_0 &\geq P_1 \cdot \left( 1 - \Sigma_{j \in \{1,\ldots,q\} \setminus \{i\}} \eta^{i,j} \right). 
    \label{eq:P_0-lower-bound}
\end{align}
By choosing $\gamma_i, \forall i \in \{1,\ldots,q\}$ and $\gamma_{min}$ properly, we can make $P_0 \geq \max \{1-\epsilon_s, 1-\epsilon_r\} \ \forall i \in \{1,\ldots,q\}.$
\end{proof}

Theorem~\ref{theorem:control-policy-safety-reachability} implies that there exists a lower bound of the probability that the safety and reachability constraints can be satisfied for $\forall \mathcal{A}_i$. 
We observe that this lower bound given in Equation~\eqref{eq:P_2-lower-bound} depends on $\gamma_i \ \forall i \in \{1,\ldots,q\}$, $\gamma_{min}$, and the noise characteristics of the system.
By choosing appropriate $\gamma_i \ \forall i \in \{1,\ldots,q\}$ and $\gamma_{min},$ and constructing corresponding set of feasible control inputs $\mathcal{U}_{\gamma_i}(t)$, we can attempt to make the lower bound be within $[\max \{1-\epsilon_s, 1-\epsilon_r\}, 1]$.

\subsection{Selection of $\gamma_i$ and $\gamma_{min}$}
\label{subsec:selection_of_gamma}
\begin{algorithm}[h]
	\caption{Algorithm for computing the parameters $\{\gamma_i, i \in 1,\ldots,q \}$ and $\gamma_{min}$ that ensures safety and reachability.}
	\label{algo:gamma-selection}
	\begin{algorithmic}[1]
		\Procedure{$\gamma$\_Selection($\epsilon_s,$ $\epsilon_r,$ $iter\_times,$ $\overline{K},$ $\lambda_{i}^{\ast},$ $\lambda_{i,j}^{\ast},$ $i,j \in 1,\ldots,q \}, i \neq j$)}{}
		\State \textbf{Input}: worst case probability of violating safety
		property $\epsilon_s$, worst case probability of violating reachability property $\epsilon_r$, number of iteration steps $iter\_times$, steady-state LQG controller gain $\overline{K}$, maximum eigenvalue of estimate covariance matrix corresponding to each possible compromised sensor set $\lambda_{i}^{\ast}$, maximum eigenvalue of estimate covariance matrix corresponding to each pair of possible compromised sensor sets $\lambda_{i,j}^{\ast}$
		\State \textbf{Output}: radii of feasible control input sets that satisfy safety and reachability properties $\gamma_i$, their minimum $\gamma_{min}$, $i \in \{1,\ldots,q\}$
		\State $\overline{\epsilon} \leftarrow \max \{1-\epsilon_s, 1-\epsilon_r\}$
		\ForEach {$i \in \{1,\ldots,q\}$}
		    \State $\overline{P}_{1,i} \leftarrow 1,$ $\underline{P}_{1,i} \leftarrow ( \overline{P}_{1,i} - \overline{\epsilon})/iter\_times$
		    \State $\gamma_i \leftarrow$ Barrier\_Certificate($\overline{P}_{1,i}$)
		\EndFor
		
        \State Update $\gamma_{min}$ and $i_{min}$.
		
		\ForEach{$i \in \{1,\ldots,q\}$}
		    \State $\eta^{i,j} \leftarrow \frac{4(\lambda_i^{\ast}\Gamma_i + \lambda_{i,j}^{\ast}\Gamma_{i,j})\overline{K}^{2}}{\gamma_{min}^{2}}$
		    \State Update $\overline{P}_{2,i}$ and $\overline{P}_{0,i}$ via~(\ref{eq:P_2-lower-bound}) and~(\ref{eq:P_0-lower-bound}).
		\EndFor
		
		\State $i \leftarrow 1$
        \While{$i \leq iter\_times$}
            
            \State $\overline{P}_{1,i_{min}} \leftarrow \overline{P}_{1,i_{min}} - \underline{P}_{1,i_{min}}$
            \State $\gamma_{i_{min}} \leftarrow $ Barrier\_Certificate($\overline{P}_{1,i_{min}}$)
		    \State Update $\gamma_{min}$ and $i_{min}$.
		    \ForEach{$i \in \{1,\ldots,q\}$}
		        \State $\eta^{i,j} \leftarrow \frac{4(\lambda_i^{\ast}\Gamma_i + \lambda_{i,j}^{\ast}\Gamma_{i,j})\overline{K}^{2}}{\gamma_{min}^{2}}$
		        \State Update $\overline{P}_{2,i}$ and $\overline{P}_{0,i}$ via~(\ref{eq:P_2-lower-bound}) and~(\ref{eq:P_0-lower-bound}).
		    \EndFor
		    
		    \If{$\forall \overline{P}_{0,i} \geq \overline{\epsilon}$}
		        \State break
		    \EndIf
		    
		    \State $i \leftarrow i + 1$
		\EndWhile
            
        \If{$i > iter\_times$}
            \State \Return{null}
        \EndIf
       
        \State \Return{$\{\gamma_i, i \in 1,\ldots,q \}, \gamma_{min}$}
        \EndProcedure
	 \end{algorithmic}
\end{algorithm}

The parameters $\gamma_i \ \forall i \in \{1,\ldots,q\}$ and $\gamma_{min}$ affect both $P_1$ and $P_2.$ In $P_1$, smaller $\gamma_{i} \ \forall i \in \{1,\ldots,q\}$ can make the system be more difficult to be biased. In $P_2$, larger $\gamma_{min}$ is more likely to make the correct constraint be kept. In order to guarantee safety and reachability for all $\mathcal{A}_i$, we need to keep $P_0 \geq \max \{1-\epsilon_s, 1-\epsilon_r\} \ \forall i \in \{1,\ldots,q\}$. Since the closed form expression of $P_1$ as a function of $\gamma_{i}$ is not clear, we utilize a heuristic method to search for proper value of $\gamma_i \ \forall i \in \{1,\ldots,q\}$ and $\gamma_{min}$ which satisfies $P_1 \cdot P_2 \geq \max \{1-\epsilon_s, 1-\epsilon_r\}$ for all $\mathcal{A}_i.$ The intuition is that we initialize $P_1$ first to compute the corresponding $\gamma_{i}$ to the candidate $P_1,$ and calculate the corresponding $\gamma_{min} = \min_{i} \gamma_i$ for all $i \in \{1,\ldots,q\}.$ Then we check the lower bound of $P_0$ for all $\mathcal{A}_i$ according to inequality~(\ref{eq:P_0-lower-bound}). If the lower bound is greater than $\max \{1-\epsilon_s, 1-\epsilon_r\}$ for all $\mathcal{A}_i,$ the safety and reachability constraints are satisfied. Otherwise, we enlarge $\gamma_{min}$ by reducing the $P_1$ corresponding to the minimal $\gamma_i$, recalculating this $\gamma_i$ under the updated $P_1,$ and checking the lower bounds of $P_0$ for all $\mathcal{A}_i$ under new $\gamma_{min}.$ We do this procedure iteratively until the lower bounds of $P_0$ for all $\mathcal{A}_i$ are greater than $\max \{1-\epsilon_s, 1-\epsilon_r\}$ for all $\mathcal{A}_i.$ The proposed procedure is shown in Algorithm~\ref{algo:gamma-selection}.

\section{Control Strategy for Single-adversary Scenario}
\label{Single-adversarial-scenario}


In this section, we consider a special case where there exists a unique attack pattern, denoted as $\mathcal{A}_1.$ Both the controller and the adversary have the knowledge of $\mathcal{A}^{\ast} = \mathcal{A}_1.$ However, the controller does not know which sensors in $\mathcal{A}_1$ are compromised by the adversary. 

In the single-adversary scenario, Equation~\eqref{eq:problem-form-revised multi} in the multiple-adversary scenario becomes
 \begin{align}
 \label{eq:problem-form-revised}
 \min_{\mathbf{u}(t)} \ & \mathbf{E}[\int_{0}^{T}{((\mathbf{x}(t)-\mathbf{r}(t))^{T}Q(\mathbf{x}(t)-\mathbf{r}(t)) }   + \mathbf{u}(t)^{T}R\mathbf{u}(t))  dt \nonumber \\
 &+(\mathbf{x}(T)-\mathbf{r}(T))^{T}F(\mathbf{x}(T)-\mathbf{r}(T)) ] \\
 \mbox{s.t.} \ & \mathbf{u}(t) \in \mathcal{U}_{\gamma_1}(t), \forall t \in [0,T]. \label{eq:problem-form-revised con} 
 \end{align}

When solving~\eqref{eq:HJB multi}, we do not know which $\mathcal{A}_i$ is $\mathcal{A}^{\ast},$ and $\mathcal{I}(t)$ is changing at each time step $t.$ Thus, we have to relax the constraints $\mathcal{U}_{\gamma_i}(t)$ for all $i \in \mathcal{I}(t).$ Since $\mathcal{A}^{\ast} = \mathcal{A}_1$ is known in the single-adversary scenario, we use the method of Lagrange multipliers to construct the dual problem of Equation~(\ref{eq:problem-form-revised}) rather than relax the constraint $\mathcal{U}_{\gamma_1}(t)$ when solving the HJB equation. Denote the Lagrange multiplier corresponding to constraint $\mathcal{U}_{\gamma_1}(t)$ as $\lambda(t)$. 
We form the unconstrained primal problem of Equation~(\ref{eq:problem-form-revised}) as
\begin{align}
\label{eq:problem-form-primal}
\begin{split}
    \min_{\mathbf{u}(t)} \max_{\lambda(t)} \ & \mathbf{E}\{\int_{0}^{T}{[(\mathbf{x}(t)-\mathbf{r}(t))^{T}Q(\mathbf{x}(t)-\mathbf{r}(t)) }   + \mathbf{u}(t)^{T}R\mathbf{u}(t) \\
    &- \lambda(t)\left(\gamma_1^2 - \left(\mathbf{u}(t) - \mathbf{u}_{\alpha,1}(t)\right)^T\left(\mathbf{u}(t) - \mathbf{u}_{\alpha,1}(t)\right)\right) ]  dt \\
    &+(\mathbf{x}(T)-\mathbf{r}(T))^{T}F(\mathbf{x}(T)-\mathbf{r}(T)) \} \\
    \mbox{s.t.} \ &  \lambda(t) \geq 0, \forall t \in [0,T]
\end{split}
\end{align}

The dual problem of~(\ref{eq:problem-form-revised}) is~\cite{shapiro2005duality} 
\begin{subequations}\label{eq:problem-form-dual}
 \begin{align}
 \max_{\lambda(t)} &\min_{\mathbf{u}(t)} \ \mathbf{E}\{\int_{0}^{T}{[(\mathbf{x}(t)-\mathbf{r}(t))^{T}Q(\mathbf{x}(t)-\mathbf{r}(t)) }   + \mathbf{u}(t)^{T}R\mathbf{u}(t) \nonumber \\
 &\qquad- \lambda(t)\left(\gamma_1^2 - \left(\mathbf{u}(t) - \mathbf{u}_{\alpha,1}(t)\right)^T\left(\mathbf{u}(t) - \mathbf{u}_{\alpha,1}(t)\right)\right) ]  dt \nonumber \\
 & \qquad +(\mathbf{x}(T)-\mathbf{r}(T))^{T}F(\mathbf{x}(T)-\mathbf{r}(T)) \} \label{eq:problem-form-dual obj}\\
 \mbox{s.t.} \ &  \lambda(t) \geq 0, \forall t \in [0,T]
 \end{align}
 \end{subequations}
Choosing the value of $\lambda(t)$ at each time step $t$ is challenging, so we relax the problem by assuming $\lambda(t) \equiv \lambda \geq 0$ for all $t \in [0,T]$. The following analysis can be repeated for different values of $\lambda$ to obtain a lower bound on the solution to (20), and hence a lower bound on the value function. The inner minimization problem of Equation~\eqref{eq:problem-form-dual} can be rewritten as

 \begin{align}
 &\min_{\mathbf{u}(t)} \ \mathbf{E}\{\int_{0}^{T}{[(\widetilde{\mathbf{x}}(t) - \widetilde{\mathbf{r}}(t))^T\widetilde{Q}(t)(\widetilde{\mathbf{x}}(t) - \widetilde{\mathbf{r}}(t))} \nonumber \\
 &\qquad+ \mathbf{u}(t)^{T}(\lambda I + R)\mathbf{u}(t)  - \widetilde{\mathbf{x}}(t)^TM(t)\mathbf{u}(t) - \lambda \gamma_1^2 ]  dt \nonumber \\
 & \qquad +(\widetilde{\mathbf{x}}(T)-\widetilde{\mathbf{r}}(T))^{T}\widetilde{F}(\widetilde{\mathbf{x}}(T)-\widetilde{\mathbf{r}}(T)) \}, \label{eq:problem-form-dual obj2}
 \end{align}
where 
$\widetilde{\mathbf{x}}(t) = [\mathbf{x}(t)^T, \hat{\mathbf{x}}_{\alpha,1}(t)^T, \mathbf{s}(t)^T]^T,$ $\widetilde{\mathbf{r}}(t) = [\mathbf{r}(t)^T, 0, 0]^T,$
\begin{equation*}
    \widetilde{Q}(t) = 
    \begin{bmatrix}
        Q & 0 & 0 \\
        0 & \lambda K(t)^TK(t) & -\frac{1}{2}\lambda K(t)^TR^{-1}B^T \\
        0 & -\frac{1}{2}\lambda BR^{-1}K(t) & \frac{1}{4}\lambda BR^{-1}R^{-1}B^T
    \end{bmatrix}
\end{equation*}
\begin{equation*}
    M(t) = 
    \begin{bmatrix}
        0 \\
        2\lambda K(t)^T \\
        -\lambda BR^{-1}
    \end{bmatrix},
    \widetilde{F}(t) = 
    \begin{bmatrix}
        F & 0 & 0 \\
        0 & 0 & 0 \\
        0 & 0 & 0
    \end{bmatrix}
\end{equation*}
The system state $\widetilde{\mathbf{x}}(t)$ evolves as the system dynamics
\begin{equation}
    \dot{\widetilde{\mathbf{x}}}(t) = \widetilde{A}(t)\widetilde{\mathbf{x}}(t) + \widetilde{B}\mathbf{u}(t) + H\widetilde{\mathbf{r}}(t) +  \widetilde{N}\widetilde{\mathbf{w}}(t)
\end{equation}
where
\begin{equation*}
    \widetilde{A}(t) = 
    \begin{bmatrix}
        A & 0 & 0 \\
        \widetilde{\Theta}(t) & A-\widetilde{\Theta}(t) & 0 \\
        0 & 0 & -A^T + \frac{1}{2}P(t)BR^{-1}B^T
    \end{bmatrix}
\end{equation*}
\begin{equation*}
    \widetilde{\Theta}(t) = \Theta_{\alpha,1}(t)C_{\alpha,1},
    \widetilde{B} = 
    \begin{bmatrix}
        B \\
        B \\
        0
    \end{bmatrix},
    H = 
    \begin{bmatrix}
        0 & 0 & 0 \\
        0 & 0 & 0 \\
        2Q & 0 & 0 
    \end{bmatrix}
\end{equation*}
\begin{equation*}
    \widetilde{N} = 
    \begin{bmatrix}
        I & 0 \\
        0 & \Theta_{\alpha,1}(t)
    \end{bmatrix},
    \widetilde{\mathbf{w}}(t) = 
    \begin{bmatrix}
        \mathbf{w}(t) \\
        \mathbf{v}_{\alpha,1}(t)
    \end{bmatrix}
\end{equation*}

The solution of Equation~(\ref{eq:problem-form-dual obj2}) can be obtained by solving a stochastic HJB equation~\cite{kirk2004optimal}. 
\begin{align}
\label{eq:HJB}
    0 = &\min_{\mathbf{u}(t)}{\left\{(\widetilde{\mathbf{x}}(t) - \widetilde{\mathbf{r}}(t))^T\widetilde{Q}(t)(\widetilde{\mathbf{x}}(t) - \widetilde{\mathbf{r}}(t)) \right.} \nonumber\\
    &+ \mathbf{u}(t)^{T}(\lambda I + R)\mathbf{u}(t) 
    - \widetilde{\mathbf{x}}(t)^TM(t)\mathbf{u}(t)\nonumber\\
    &- \lambda \gamma_1^2 + V_{\widetilde{\mathbf{x}}}(t,\widetilde{\mathbf{x}})(\widetilde{A}(t)\widetilde{\mathbf{x}}(t) + \widetilde{B}\mathbf{u}(t) + H\widetilde{\mathbf{r}}(t)) \nonumber\\ 
    &+ \frac{1}{2}\mathbf{tr}(V_{\widetilde{\mathbf{x}}\widetilde{\mathbf{x}}}(t,\widetilde{\mathbf{x}})\widetilde{N}\Sigma_{\widetilde{\mathbf{w}}}\widetilde{N}^T) + \left.V_{t}(t,\widetilde{\mathbf{x}})  \right\}
\end{align}
where $V(t,\widetilde{\mathbf{x}})$ is the value function of equation~(\ref{eq:problem-form-dual obj2}), $V_{\widetilde{\mathbf{x}}}$ and $V_{\widetilde{\mathbf{x}}\widetilde{\mathbf{x}}}(t,\widetilde{\mathbf{x}})$ are the first and second derivatives with respect to $\widetilde{\mathbf{x}},$ and $V_{t}(t,\widetilde{\mathbf{x}})$ is the first derivative with respect to $t.$ This value function is the lower bound of the value function of Equation~(\ref{eq:problem-form-revised}) if the control input $\mathbf{u}(t)$ satisfies Equation~(\ref{eq:problem-form-revised con}) because the term $\lambda\left(\gamma_1^2 - \left(\mathbf{u}(t) - \mathbf{u}_{\alpha,1}(t)\right)^T\left(\mathbf{u}(t) - \mathbf{u}_{\alpha,1}(t)\right)\right)$ is nonnegative. The optimal control input $\mathbf{u}^{\ast}(t)$ of Equation~(\ref{eq:problem-form-dual obj2}) is equal to the minimizer of Equation~(\ref{eq:HJB}) for all $t \in [0, T].$ 

The value function is represented as \cite{anderson2007optimal}
\begin{align}
    \label{eq:value-func1}
    V(t,\widetilde{\mathbf{x}}) &= \frac{1}{2}\widetilde{\mathbf{x}}(t)^{T}\widetilde{P}(t)\widetilde{\mathbf{x}}(t) + \widetilde{\mathbf{x}}(t)^{T}\widetilde{\mathbf{s}}(t) + \widetilde{s}_{0}(t) + \widetilde{\beta}(t) \\
    \label{eq:value-func2}
    V_{\widetilde{\mathbf{x}}}(t,\widetilde{\mathbf{x}}) &= \widetilde{\mathbf{x}}(t)^T\widetilde{P}(t) + \widetilde{\mathbf{s}}(t)^T \\
    \label{eq:value-func3}
    V_{\widetilde{\mathbf{x}}\widetilde{\mathbf{x}}}(t,\widetilde{\mathbf{x}}) &= \widetilde{P}(t) \\
    \label{eq:value-func4}
    V_t(t,\widetilde{\mathbf{x}}) &= \frac{1}{2}\widetilde{\mathbf{x}}(t)^{T}\dot{\widetilde{P}}(t)\widetilde{\mathbf{x}}(t) + \widetilde{\mathbf{x}}(t)^{T}\dot{\widetilde{\mathbf{s}}}(t) + \dot{\widetilde{s}}_{0}(t) + \dot{\widetilde{\beta}}(t) 
\end{align}

Substitute the equations~(\ref{eq:value-func1})-(\ref{eq:value-func4}) into the equation~(\ref{eq:HJB}) and let the value function satisfy the HJB equation for all $\widetilde{\mathbf{x}}(t)$, we have
\begin{align}
    -\dot{\widetilde{P}}(t) &= \widetilde{A}^{T}\widetilde{P}(t) + \widetilde{P}(t)\widetilde{A} + 2\widetilde{Q} \nonumber\\
    - \frac{1}{2}&(\widetilde{B}^{T}\widetilde{P}(t) - M(t)^T)^T(R+\lambda I)^{-1}(\widetilde{B}^{T}\widetilde{P}(t) - M(t)^T)  \\
    \dot{\widetilde{\mathbf{s}}}(t) &= (-\widetilde{A}^{T} + \frac{1}{2}\widetilde{P}(t)\widetilde{B}(R+\lambda I)^{-1}\widetilde{B}^T \nonumber\\
    &- \frac{1}{2}M(t)\widetilde{B})\widetilde{\mathbf{s}}(t) + (2\widetilde{Q} - \widetilde{P}(t)H)\widetilde{\mathbf{r}}(t) \\
    \dot{\widetilde{s}}_{0}(t) &=  \frac{1}{4}\widetilde{\mathbf{s}}(t)^T\widetilde{B}(R+\lambda I)^{-1}\widetilde{B}^T\widetilde{\mathbf{s}}(t) - \widetilde{\mathbf{r}}(t)^{T}Q\widetilde{\mathbf{r}}(t) \nonumber\\
    &- \widetilde{\mathbf{s}}(t)^TH\widetilde{\mathbf{r}}(t) + \lambda \gamma_1^2\\
    -\dot{\widetilde{\beta}}(t) &= \frac{1}{2}\mathbf{tr}(\widetilde{P}(t)\widetilde{N}\Sigma_{\widetilde{\mathbf{w}}}\widetilde{N}^T) 
\end{align}
with boundary conditions $\widetilde{\mathbf{s}}(T) = -2\widetilde{F}\widetilde{\mathbf{r}}(T)$ and $\widetilde{P}(T) = 2\widetilde{F}.$

The minimizer of equation~(\ref{eq:HJB}) for all $t \in [0, T]$ is 
\begin{align}
    \mathbf{u}^{\ast}(t) = &- \frac{1}{2}(R+\lambda I)^{-1}(\widetilde{B}^{T}\widetilde{P}(t) - M(t)^T)\widetilde{\mathbf{x}}(t) \nonumber\\
    &- \frac{1}{2}(R+\lambda I)^{-1}\widetilde{B}^{T}\widetilde{\mathbf{s}}(t).
    \label{eq:controller-single}
\end{align}

Define the optimal value of Equation~\eqref{eq:problem-form-revised} as $\mathcal{V}_1,$ the value of Equation~\eqref{eq:problem-form-revised} using the solution of QCQP~\eqref{eq:approx_opt} as $\mathcal{V}_2,$ and the value of Equation~\eqref{eq:problem-form-revised} using Equation~\eqref{eq:controller-single} as $\mathcal{V}_3.$ Based on the weak duality~\cite{shapiro2005duality},
$\mathcal{V}_3 \leq \mathcal{V}_1.$ Note that when $\lambda=0,$ $\mathcal{V}_3  = \mathcal{V}_2.$ Since $0$ is a feasible solution of $\lambda$ and Equation~\eqref{eq:problem-form-dual obj} maximizes over $\lambda$, we have that $\mathcal{V}_2  \leq \mathcal{V}_3,$
which further yields that $\mathcal{V}_3$ is a tighter bound to $\mathcal{V}_1$ than $\mathcal{V}_2$.

\section{Case study}
\label{sec:simulation}

\begin{figure}
    \centering
    \includegraphics[width=3.5in]{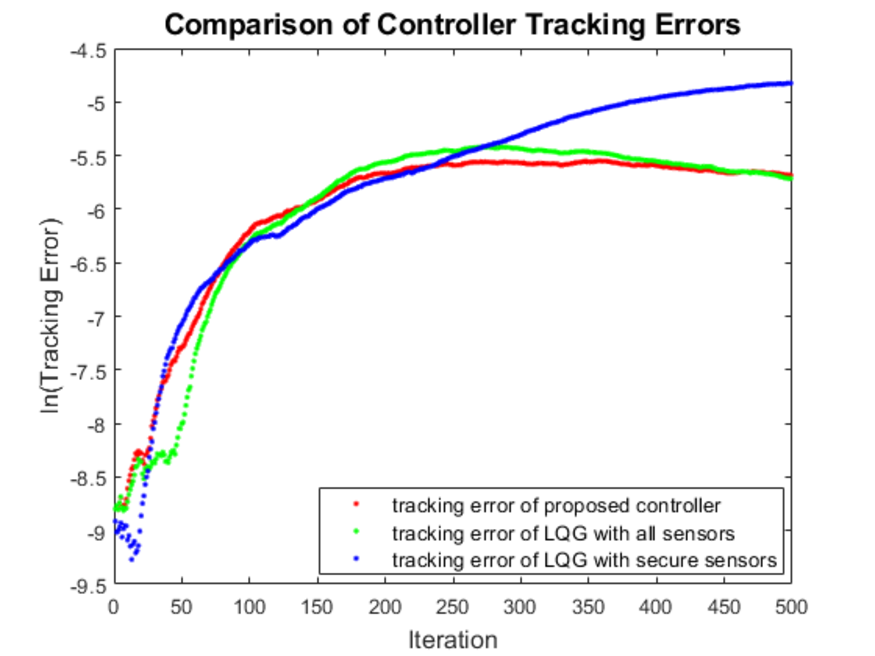}
    \caption{Evaluation of our proposed approach on a linear model case study with no attack. The tracking error of the proposed scheme and the LQG controller using all measurements is less than the tracking error of the LQG controller eliminating measurements indexed by either $\mathcal{A}_1$ or $\mathcal{A}_2$ during the first 200 time steps. }
    \label{fig:no_attack_tracking_error}
\end{figure}

\begin{figure}
    \centering
    \includegraphics[width=3.3in]{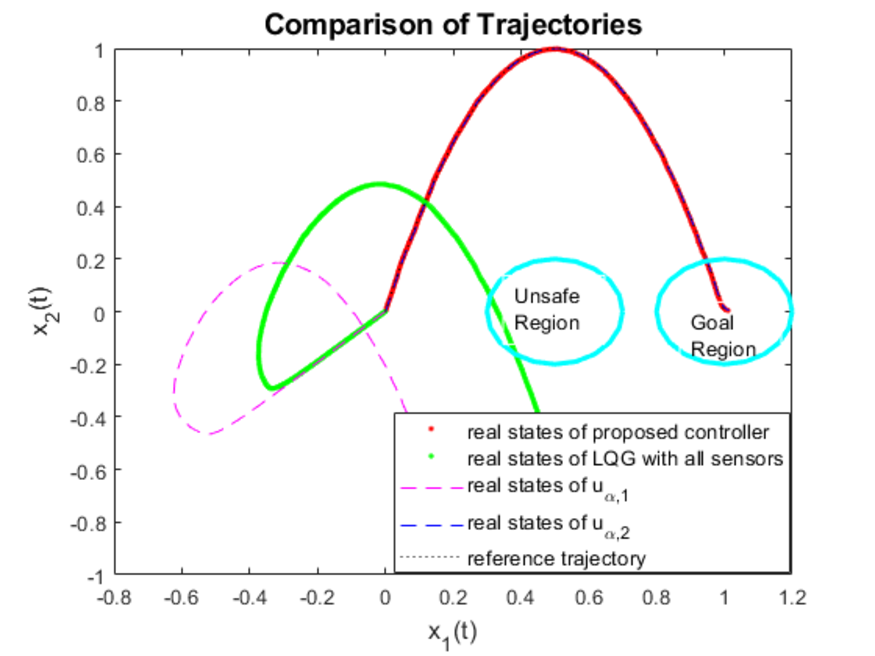}
    \caption{The states of the proposed policy and $\mathbf{u}_{\alpha,2}(t)$ converge to the goal region without reaching the unsafe region in spite of a constant attack. Meanwhile the state of the LQG controller using all measurements and $\mathbf{u}_{\alpha,1}(t)$ violates safety and/or reachability constraints.}
    \label{fig:attack-trajectory}
\end{figure}

\begin{figure}
    \centering
    \includegraphics[width=3.3in]{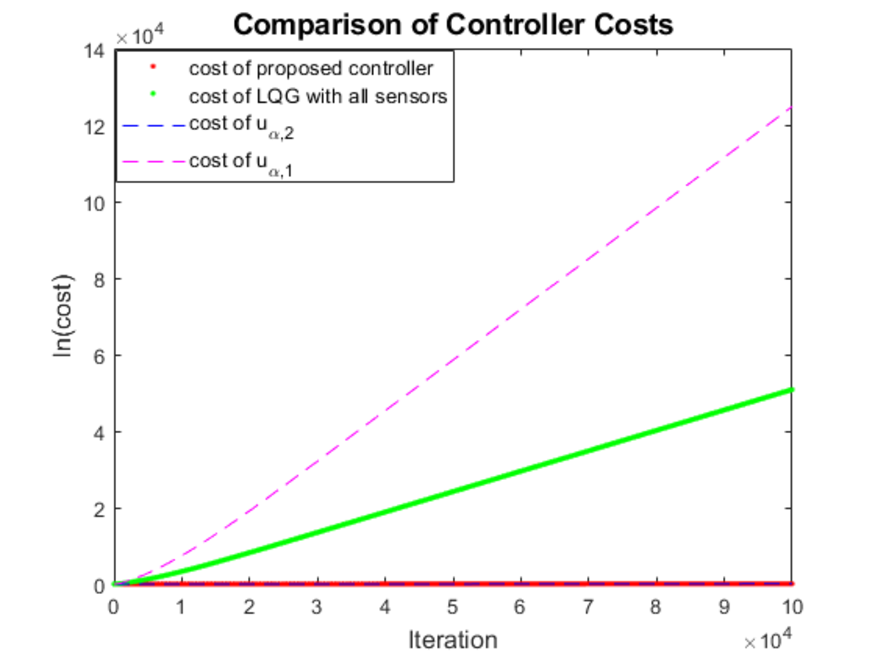}
    \caption{The average costs of the proposed policy and $\mathbf{u}_{\alpha,2}(t)$ are $31.8519$ and $29.6224,$ respectively, which are lower than those of the LQG controllers using all measurements and $\mathbf{u}_{\alpha,1}(t)$.}
    \label{fig:attack-cost}
\end{figure}

In this section, we investigate the proposed scheme in the scenarios where the adversary is present and absent. 


\subsection{System Model}
\label{subsec:system-model}
We consider a system with $2$ states, $2$ inputs, and $4$ sensors. There are $q = 2$ attack patterns, designed as $\mathcal{A}_1 = \{2\}$ and $\mathcal{A}_2 = \{4\}.$ The matrices $A$ and $B$ are set to be identity matrices $I$ with proper dimensions. The matrix $C$ is constructed as 
$C = {\begin{pmatrix}1&1&0&0\\0&0&1&1\end{pmatrix}}^T.$
The noises $\mathbf{w}(t)$ and $\mathbf{v}(t)$ are Gaussian processes with means $0$ and covariances $\Sigma_{\mathbf{w}} = 0.1I$ and $\Sigma_{\mathbf{v}} = 0.1I.$ 
The cost matrices are selected as $Q = I,$ $F = 0.03 I,$ and $R = 1\times10^{-3}I$ with proper dimensions. 

The system tracks a parabolic reference trajectory with the form 
$x_2(t) = -4\times(x_1(t) - 0.5001)^2 + 0.9999,$
where $x_1(t)$ and $x_2(t)$ denote the first and second dimensions of the state at time $t.$  The initial state $\mathbf{x}_0 = (0.0001, -0.0001)^T$. We choose that $\Phi_{\alpha,i}(0) = 10I$ and $\hat{\mathbf{x}}_{\alpha,i}(0) = (0, 0)^T$. The system is required to avoid the set of unsafe states $U = \{\mathbf{x}: 0.2^2 - (x_1 - 0.5001)^2 - (x_2 + 0.0001)^2 > 0\}$ and reach the set of the goal states $G = \{\mathbf{x}: 0.2^2 - (x_1 - 1.0001)^2 - (x_2 + 0.0001)^2 > 0\}$ at $T = 10.$ The worst case probability that the safety and reachability constraints are violated are designed as $\epsilon_s = 0.3$ and $\epsilon_r = 0.3.$

\subsection{Scenario with No Adversary}
In the scenario where the adversary is absent, we set $\mathbf{a}(t) = (0, 0, 0, 0)^T$ and the proposed scheme is compared with two LQG controllers. One LQG controller utilizes the measurements of all sensors, while the other LQG controller eliminates the measurements of sensors indexed by either $\mathcal{A}_1$ or $\mathcal{A}_2$ and utilizes only the measurements of the secure sensors. As shown in Fig.~\ref{fig:no_attack_tracking_error}, all three controllers track the reference trajectory well. The tracking errors of the proposed scheme and the LQG controller using all measurements are less than the tracking error of the LQG controller eliminating measurements indexed by $\mathcal{A}_1$ or $\mathcal{A}_2$ during the early stage. The reason is that at the early stage the gain of the KF $K(t)$ has not converged, so controllers with more sensor measurements can reduce the influence of the noise. Over all time, the average tracking errors of the proposed controller and the LQG controller utilizing measurements of all sensors are $0.0017.$ The average tracking error of the LQG controller eliminating the measurements of sensors indexed by $\mathcal{A}_1$ or $\mathcal{A}_2$ is $0.0021.$ Compared with the LQG controller eliminating the measurements of sensors indexed by $\mathcal{A}_1$ or $\mathcal{A}_2$, the proposed controller and the LQG controller utilizing measurements of all sensors decrease the tracking error for $19\%.$

\subsection{Scenario with Adversary}
In the scenario where the adversary is present, we still consider the system described in Section~\ref{subsec:system-model}. However, there are $p = 6$ sensors and $q = 3$ attack patterns, denoted as $\mathcal{A}_1 = \{1,4\},$ $\mathcal{A}^\ast = \mathcal{A}_2 = \{2,5\},$ and $\mathcal{A}_3 = \{3,6\}.$ The adversary selects $\mathbf{a}(t) = (0, 1, 0, 0, 1, 0)^T.$ Matrix
$C = {\begin{pmatrix}1&1&1&0&0&0\\0&0&0&1&1&1\end{pmatrix}}^T.$

In this scenario, $\mathcal{A}_1 \cup \mathcal{A}_2 \cup \mathcal{A}_3 = \{1,\ldots,p\}$ so there is no secure sensor. The proposed policy is compared with three LQG controllers. One LQG controller utilizes the measurements of all sensors, while the other two LQG controllers $\mathbf{u}_{\alpha,1}(t)$ and  $\mathbf{u}_{\alpha,2}(t)$ which eliminate the measurements of sensors indexed by $\mathcal{A}_1$ and $\mathcal{A}_2$, respectively. The performances of the four controllers are shown in Fig.~\ref{fig:attack-trajectory} and~\ref{fig:attack-cost}. As shown in Fig.~\ref{fig:attack-trajectory}, the proposed controller and $\mathbf{u}_{\alpha,2}(t)$ can satisfy the safety and reachability constraints, while the LQG controller using all measurements and $\mathbf{u}_{\alpha,1}(t)$ are biased by the attack and violate the constraints. From Fig.~\ref{fig:attack-cost}, we see that the cost of proposed controller converges to the cost of $\mathbf{u}_{\alpha,2}(t)$, which is less than the costs of the LQG controller using all measurements and $\mathbf{u}_{\alpha,1}(t)$. From the results shown in Fig.~\ref{fig:attack-trajectory} and~\ref{fig:attack-cost}, the proposed controller guarantees safety and reachability, and at the same time provides comparable cost performance with $\mathbf{u}_{\alpha,2}(t).$ This results from the fact that after the function $\mathcal{I}$\_Selection eliminates the contraints $\mathcal{U}_{\gamma_1}(t)$ and $\mathcal{U}_{\gamma_3}(t),$ the controller is not biased by the adversary. The LQG controller $\mathbf{u}_{\alpha,2}(t)$ is optimal, but the attack pattern $\mathcal{A}^\ast = \mathcal{A}_2$ is not known by the controller a prior. Thus, in real world, $\mathbf{u}_{\alpha,2}(t)$ is not realizable because $\mathcal{A}^\ast = \mathcal{A}_2$ is only known by the adversary.

\section{Conclusion}
\label{sec:conclusion}
This paper considered the LQG tracking problem with safety and reachability constraints and unknown FDI attack. We assumed that the adversary can compromise a subset of sensors. The controller only knows a collection of possible compromised sensor sets, but has no information about which set of sensors is under attack. We computed a control policy by bounding the control input with a collection of quadratic constraints, each of which corresponds to a possible compromised sensor set. We used a barrier certificate based algorithm to constrain the feasible region of the control policy. We proved that the proposed policy satisfies safety and reachability constraints with desired probability. We provided rules to resolve the possible conflicts between the quadratic constraints. We validated the proposed policy with a simulation study.

\section{Appendix}
\label{sec:appendix}
\textit{Proof of Theorem~\ref{theorem:stability_guarantee_EFK_attack_free}: }
Let $\overline{K} = \sup_{t \in [0,T]} ||K(t)||_2,$ $\lambda_i^{\ast} = \sup_{t \in [0,T]}{\{\lambda_{max}(\Sigma_i(t))\}},$ $\lambda_{i,j}^{\ast} = \sup_{t \in [0,T]}{\{\lambda_{max}(\Sigma_{i,j}(t))\}},$
where $\Sigma_i(t)$ and $\Sigma_{i,j}(t)$ are the covariance matrices of $(\mathbf{x}(t)-\hat{\mathbf{x}}_{\alpha,i}(t))$ and $(\mathbf{x}(t)-\hat{\mathbf{x}}_{\alpha,i,j}(t))$, respectively, and $\lambda_{max}(\cdot)$ denotes the maximum eigenvalue of a matrix.

According to triangle inequality, we obtain
\begin{equation*}
\label{eq:stable-EKF}
    \begin{array}{ll}
        &Pr\left(\sup_{t \in [0,T]} ||K(t)(\mathbf{\hat{x}}_{\alpha,i}(t)-\mathbf{\hat{x}}_{\alpha,i,j}(t))||_2 > \gamma_{min} \right) \\
        \leq &Pr\left(\sup_{t \in [0,T]} ||\mathbf{\hat{x}}_{\alpha,i}(t)-\mathbf{x}(t)||_2 \right.\\
        &\left.+ \sup_{t \in [0,T]} ||\mathbf{\hat{x}}_{\alpha,i,j}(t)-\mathbf{x}(t)||_2 > \frac{\gamma_{min}}{\overline{K}} \right)
    \end{array}
\end{equation*}
In order for $\sup_{t \in [0,T]} ||\mathbf{\hat{x}}_{\alpha,i}(t)-\mathbf{x}(t)||_2 + \sup_{t \in [0,T]} ||\mathbf{\hat{x}}_{\alpha,i,j}(t)-\mathbf{x}(t)||_2 > \frac{\gamma_{min}}{\overline{K}}$ to hold, at least one of $\sup_{t \in [0,T]} ||\mathbf{\hat{x}}_{\alpha,i}(t)-\mathbf{x}(t)||_2$ and $\sup_{t \in [0,T]} ||\mathbf{\hat{x}}_{\alpha,i,j}(t)-\mathbf{x}(t)||_2$ should be greater than $\frac{\gamma_{min}}{2\overline{K}}.$ Thus we can consider the right hand side of inequality~(\ref{eq:stable-EKF}) as a union probability, and apply the rule of addition on the union probability
\begin{equation}
    \begin{array}{ll}
        &Pr\left(\sup_{t \in [0,T]} ||\mathbf{\hat{x}}_{\alpha,i}(t)-\mathbf{x}(t)||_2 \right.\\
        &\left.+ \sup_{t \in [0,T]} ||\mathbf{\hat{x}}_{\alpha,i,j}(t)-\mathbf{x}(t)||_2 > \frac{\gamma_{min}}{\overline{K}} \right) \\
        \leq &Pr\left(\sup_{t \in [0,T]} ||\mathbf{\hat{x}}_{\alpha,i}(t)-\mathbf{x}(t)||_2 > \frac{\gamma_{min}}{2\overline{K}}\right) \\
        &+ Pr\left(\sup_{t \in [0,T]} ||\mathbf{\hat{x}}_{\alpha,i,j}(t)-\mathbf{x}(t)||_2 > \frac{\gamma_{min}}{2\overline{K}}\right)
    \end{array}
\end{equation}

Define $\mathbf{e}_i(t) = \mathbf{\hat{x}}_{\alpha,i}(t)-\mathbf{x}(t)$ for $\forall i,$ and $\mathbf{e}_{ij}(t) = \mathbf{\hat{x}}_{\alpha,i,j}(t)-\mathbf{x}(t)$ for $\forall i, j.$ We have
\begin{equation*}
    \begin{array}{ll}
        &Pr\left(\sup_{t \in [0,T]} ||\mathbf{\hat{x}}_{\alpha,i}(t)-\mathbf{x}(t)||_2 > \frac{\gamma_{min}}{2\overline{K}}\right) \\
        &+ Pr\left(\sup_{t \in [0,T]} ||\mathbf{\hat{x}}_{\alpha,i,j}(t)-\mathbf{x}(t)||_2 > \frac{\gamma_{min}}{2\overline{K}}\right) \\
        = &Pr\left(\sup_{t \in [0,T]} \mathbf{e}_i(t)^T\mathbf{e}_i(t) > \frac{\gamma_{min}^2}{4\overline{K}^2}\right) \\
        &+ Pr\left(\sup_{t \in [0,T]} \mathbf{e}_{ij}(t)^T\mathbf{e}_{ij}(t) > \frac{\gamma_{min}^2}{4\overline{K}^2}\right) \\
        \leq &Pr\left(\sup_{t \in [0,T]} \mathbf{e}_i(t)^T{\Sigma_i(t)}^{-1}\mathbf{e}_i(t) > \frac{\gamma_{min}^2}{4\overline{K}^2\lambda_i^{\ast}}\right) \\
        &+ Pr\left(\sup_{t \in [0,T]} \mathbf{e}_{ij}(t)^T{\Sigma_{i,j}(t)}^{-1}\mathbf{e}_{ij}(t) > \frac{\gamma_{min}^2}{4\overline{K}^2\lambda_{i,j}^{\ast}}\right) 
    \end{array}
\end{equation*}

For observable systems, the functions $V(\mathbf{e}_i(t),t) = \mathbf{e}_i(t)^{T}\Sigma_i(t)^{-1}\mathbf{e}_i(t)$ and $V(\mathbf{e}_{ij}(t),t) = \mathbf{e}_{ij}(t)^{T}\Sigma_{ij}(t)^{-1}\mathbf{e}_{ij}(t)$ are known to have differential generators that are strictly decreasing \cite{reif2000stochastic}, and hence are supermartingales. Lemma~\ref{lemma:doob_martingale} then implies that 
\begin{equation*}
    \begin{array}{ll}
        &Pr(\sup_{t \in [0,T]} \mathbf{e}_i(t)^T{\Sigma_i(t)}^{-1}\mathbf{e}_i(t) > \frac{\gamma_{min}^2}{4\overline{K}^2\lambda_i^{\ast}})  \\
        &+ Pr(\sup_{t \in [0,T]} \mathbf{e}_{ij}(t)^T{\Sigma_{i,j}(t)}^{-1}\mathbf{e}_{ij}(t) > \frac{\gamma_{min}^2}{4\overline{K}^2\lambda_{i,j}^{\ast}})  \\
        \leq &\left(\frac{\gamma_{min}^{2}}{4\overline{K}^{2}\lambda_i^{\ast}}\right)^{-1}\lim_{t \rightarrow 0}{\left[\mathbf{E}\left(\mathbf{e}_i(t)^T{\Sigma_i(t)}^{-1}\mathbf{e}_i(t)\right)\right]}  \\
        &+ \left(\frac{\gamma_{min}^{2}}{4\overline{K}^{2}\lambda_{i,j}^{\ast}}\right)^{-1}\lim_{t \rightarrow 0}{\left[\mathbf{E}\left(\mathbf{e}_{ij}(t)^T{\Sigma_{ij}(t)}^{-1}\mathbf{e}_{ij}(t)\right)\right]} \\
        = &\frac{4\lambda_i^{\ast}\overline{K}^{2}}{\gamma_{min}^{2}}{\left[\mathbf{E}\left(\mathbf{e}_i(0)^T{\Sigma_i(0)}^{-1}\mathbf{e}_i(0)\right)\right]} \\
        &+ \frac{4\lambda_{i,j}^{\ast}\overline{K}^{2}}{\gamma_{min}^{2}}{\left[\mathbf{E}\left(\mathbf{e}_{ij}(0)^T{\Sigma_{ij}(0)}^{-1}\mathbf{e}_{ij}(0)\right)\right]}.
    \end{array}
\end{equation*}
Letting $\eta^{i,j} = {4(\lambda_i^{\ast}\Gamma_i + \lambda_{i,j}^{\ast}\Gamma_{i,j})\overline{K}^{2}}/{\gamma_{min}^{2}},$ where $\Gamma_i = \mathbf{E}\left(\mathbf{e}_i(0)^T{\Sigma_i(0)}^{-1}\mathbf{e}_i(0)\right)$, $\Gamma_{i,j} = \mathbf{E}\left(\mathbf{e}_{ij}(0)^T{\Sigma_{ij}(0)}^{-1}\mathbf{e}_{ij}(0)\right),$ $\mathbf{e}_i(0) = \hat{\mathbf{x}}_{\alpha,i}(0) - \mathbf{x}_0,$ and $\mathbf{e}_{ij}(0) = \hat{\mathbf{x}}_{\alpha,i,j}(0) - \mathbf{x}_0.$ we have 
\begin{equation*}
    Pr\left(\sup_{t \in [0,T]} ||K(t)\left(\mathbf{\hat{x}}_{\alpha,i}(t) - \mathbf{\hat{x}}_{\alpha,i,j}(t)\right) ||_2 > \gamma_{min} \right) \leq \eta^{i,j}.
\end{equation*}
\begin{flushright}
$\square$
\end{flushright}

\end{document}